\newcommand{\bd}{\begin{description}}
\newcommand{\ed}{\end{description}}
\newcommand{\bi}{\begin{itemize}}
\newcommand{\ei}{\end{itemize}}
\newcommand{\be}{\begin{enumerate}}
\newcommand{\ee}{\end{enumerate}}
\newcommand{\beq}{\begin{equation}}
\newcommand{\eeq}{\end{equation}}
\newcommand{\beqs}{\begin{eqnarray*}}
\newcommand{\eeqs}{\end{eqnarray*}}
\numberwithin{equation}{section}
\definecolor{DarkGreen}{rgb}{0.2, 0.6, 0.3}
\newtheorem{theorem}{Theorem}[section]
\newtheorem{definition}{Definition}[section]
\newtheorem{corollary}[theorem]{Corollary}
\newtheorem{case}{Case}
\newtheorem{claim}{Claim}
\newtheorem{remark}{Remark}[section]
\newtheorem{example}{Example}[section]
\newtheorem{proposition}{Proposition}[section]
\newtheorem{observation}{Observation}[section]
\tikzstyle{sted} = [rectangle , rounded corners,
\tikzstyle{rec} = [rectangle, text centered, minimum size=1.8cm, align=center, draw = black, fill = pink!20,scale=0.8]
\tikzstyle{blu} = [rectangle, text centered, minimum size=1.2cm, align=center, draw = black, fill = blue!10,scale=0.8]
\tikzstyle{gre} = [rectangle, text centered, minimum size=1.2cm, align=center, draw = black, fill = green!10,scale=0.8]
\long\def\delete#1{}
\def\b{\beta}
\def\b0{\mathbf{0}}
\begin{document}
\title{\textbf{The $g$-good-neighbor  diagnosability of product networks under the PMC model} \footnote{Supported by the National Science Foundation of China
(Nos. 12471329 and 12061059).} }

\author{Zhao
Wang\thanks	{College of Science, China Jiliang University, Hangzhou
310018, China. {\tt
wangzhao@mail.bnu.edu.cn}}, \ \ Yaping Mao \footnote{Corresponding author: Academy of Plateau
Science and Sustainability, and School of Mathematics and Statistis, Qinghai
Normal University, Xining, Qinghai 810008, China. {\tt maoyaping@ymail.com}}, \ \ Sun-Yuan Hsieh\footnote{Department of Computer Science and Information Engineering, National Cheng Kung University, Tainan 701, Taiwan. {\tt hsiehsy@mail.ncku.edu.tw}},
\ \ Ralf Klasing\footnote{Universit\'{e} de Bordeaux, Bordeaux INP, CNRS, LaBRI, UMR 5800, Talence, France.
{\tt Email: ralf.klasing@labri.fr.}}
}
\date{}
\maketitle
\begin{abstract}
The concept of neighbor connectivity originated from the assessment of the subversion of espionage networks caused by underground resistance movements, and it has now been applied to measure the disruption of networks caused by cascading failures through neighbors. In this paper, we give two necessary and sufficient conditions of the existance of $g$-good-neighbor diagnosability. We introduce a new concept called $g$-good neighbor cut-component number (gc number for short), which has close relation with $g$-good-neighbor diagnosability. Sharp lower and upper bounds of the gc number of general graphs in terms of the $g$-good neighbor connectivity is given, which provides a formula to compute the $g$-good-neighbor diagnosability for general graphs (therefore for Cartesian product graphs). As their applications, we get the exact values or bounds for the gc numbers and $g$-good-neighbor diagnosability of grid, torus networks and generalized cubes.\\[0.1cm]
{\bf Index Terms}-PMC model, diagnosability, $g$-good neighbor connectivity, $g$-good-neighbor diagnosability, product networks.
\end{abstract}

\section{Introduction}

Advances in semiconductor process technology have led to the development of multiprocessor systems.
Therefore, fault-diagnosis of multiprocessor systems becomes increasingly important.
A multiprocessor system consists of several processors, also known as nodes. These nodes communicate with each other by sending and receiving messages over the Internet. Within the Internet framework, each node is equipped with one or more processing elements, a local memory, and a communication module.

An undirected graph $G=G(V, E)$ serves as a model for a multiprocessor system. In this graph, processors are represented by vertices, and communication links are depicted by edges.

In the classic PMC model \cite{PMC67}, processors that are neighbors can test one another. For any two adjacent vertices $u$ and $v$ in the vertex set $V$, the ordered pair $(u, v)$ indicates that the processor associated with vertex $u$ is testing the processor represented by vertex $v$. Here, $u$ is called the \textit{tester}, and $v$ is known as the \textit{tested vertex}. Concerning the result of the test $(u, v)$, it is set to 1 if $u$ judges $v$ to be faulty; otherwise, it is set to 0 if $u$ deems $v$ to be fault-free.

For a system modeled by the graph $G=G(V, E)$, a \textit{test assignment} is composed of a set of tests $(u, v)$ for specific pairs of adjacent vertices. This test assignment can be presented as a directed graph $T=(V, L)$. In this directed graph, if $(u, v) \in L$, it implies that in the original graph $G$, vertices $u$ and $v$ are adjacent. We assume that for every edge connecting two vertices in $G$, the two vertices test each other, and all these tests are part of the test assignment.

The collection of all test outcomes for a test assignment $T$ is known as a \textit{syndrome}. In a formal sense, a syndrome is defined as a function $\sigma$ that maps from the set $L$ to the set $\{0,1\}$. The set consisting of all the processors that are malfunctioning within the system is referred to as the \textit{faulty set}. This faulty set can be any subset of the set $V$. The procedure of determining all the faulty vertices is what we call the diagnosis of the system. The largest number of faulty vertices that the system $G$ can ensure to be identified is called the \textit{diagnosability} of $G$,  is denoted as $t(G)$.

Given a syndrome $\sigma$, a vertex subset $F \subseteq V$ is considered \textit{consistent} with $\sigma$ under the following condition: for any link $(u, v) \in L$ where $u \in V-F, \sigma(u, v)=1$ exactly when $v \in F$. Since a malfunctioning tester can yield unreliable results, a particular set $F$ of faulty vertices might generate various syndromes. Denote by $\sigma(F)$ the set of all syndromes that could be produced when $F$ is the set of faulty vertices.

Two different vertex sets $F_1, F_2 \subset V$ are called \textit{indistinguishable} if the intersection of $\sigma\left(F_1\right)$ and $\sigma\left(F_2\right)$ is non-empty, i.e., $\sigma\left(F_1\right) \cap \sigma\left(F_2\right) \neq \emptyset$. Conversely, if $\sigma\left(F_1\right) \cap \sigma\left(F_2\right)=\emptyset$, then $F_1$ and $F_2$ are said to be \textit{distinguishable}. We also refer to the pair $\left(F_1, F_2\right)$ as an \textit{indistinguishable pair} when $\sigma\left(F_1\right) \cap \sigma\left(F_2\right) \neq \emptyset$, and as a distinguishable pair when $\sigma(F_1) \cap \sigma(F_2)=\emptyset$.

\subsection{Preliminaries}

For a graph $G$, let $V(G)$, $E(G)$, and $\overline{G}$
denote the set of vertices, the set of edges, the
complement of $G$, respectively. A subgraph $H$ of $G$ is a graph
with $V(H)\subseteq V(G)$, $E(H)\subseteq E(G)$, and the endpoints
of every edge in $E(H)$ belonging to $V(H)$. For any subset $X$ of
$V(G)$, let $G[X]$ denote the subgraph induced by $X$; similarly,
for any subset $F$ of $E(G)$, let $G[F]$ denote the subgraph induced
by $F$. We use $G-X$ to denote the subgraph of $G$ obtained by
removing all the vertices of $X$ together with the edges incident
with them from $G$; similarly, we use $G-F$ to denote the subgraph
of $G$ obtained by removing all the edges of $F$ from $G$. For two subsets $X$ and $Y$
of $V(G)$ we denote by $E_G[X,Y]$ the set of edges of $G$ with one
end in $X$ and the other end in $Y$. If $X=\{x\}$, we simply write
$E_G[x,Y]$ for $E_G[\{x\},Y]$. The {\it degree}\index{degree} of a
vertex $v$ in a graph $G$, denoted by $deg_G(v)$, is the number of
edges of $G$ incident with $v$. Let $\delta(G)$ and $\Delta(G)$ be
the minimum degree and maximum degree of the vertices of $G$,
respectively. The set of neighbors of a vertex $v$ in a graph $G$ is
denoted by $N_G(v)$. The {\it union} $G\cup H$ of two graphs $G$ and
$H$ is the graph with vertex set $V(G)\cup V(H)$ and edge set
$E(G)\cup E(H)$. Let $F_1$ and $F_2$ be two distinct subsets of $V$, then their \textit{symmetric difference} is $F_1 \Delta F_2=\left(F_1- F_2\right) \cup\left(F_2 - F_1\right)$. The \emph{connectivity} of a graph $G$, denoted by $\kappa(G)$, is
the minimal number of vertices whose removal from $G$ produces a
disconnected graph or only one vertex.

\subsection{Diagnosis models}

The identification of faulty nodes is referred to as the diagnosis of the system. In 1967, Preparata et al. \cite{PMC67} put forward a model and a framework named system-level diagnosis. This framework enabled the system to automatically test the processors by itself. System-level diagnosis is widely recognized as an alternative to traditional circuit-level testing in large-scale multiprocessor systems. In the over four-decade period subsequent to this groundbreaking work, numerous terms related to system-level diagnosis have been defined, and different models such as the PMC, BGM, and comparison models have been explored in the literature \cite{PMC67,MM81,MM78,BGM76,FS80}. Among the proposed models, two prominent diagnosis models, namely the Preparata, Metze, and Chien (PMC) model \cite{PMC67} and the Maeng and Malek (MM) model \cite{MM81}, have been extensively utilized.

In the PMC model, for any two nodes $u$ and $v$, if there exists a connecting link between them, node $u$ has the capacity to test node $v$. Here, $u$ is designated as the \textit{tester} and $v$ is named the \textit{tested node}. When a test is carried out by a non-faulty tester, the test result is 1 in the case where the tested node is faulty and 0 when the tested node is non-faulty. However, when the tester itself is faulty, the test result cannot be trusted.

In the MM model, a particular node, known as a comparator, dispatches an identical task to its two adjacent neighbors and then evaluates their responses. The comparison of nodes $u$ and $v$ carried out by node $w$ is symbolically represented as $(u, v)_w$. When a fault-free comparator detects a discrepancy in the comparison, it signals the presence of a faulty node. Conversely, when the comparator is itself faulty, the result of the comparison cannot be relied upon. One of the primary strengths of this model lies in its simplicity, as it is straightforward to conduct a comparison between a pair of nodes within multiprocessor systems. This method is appealing because it does not necessitate any additional hardware, and both transient and permanent faults can be detected prior to the completion of the comparison program. A research paper by Sengupta and Dahbura \cite{SD92} uncovered crucial characteristics of a diagnosable system within the context of this model. It proposed a specialized variant of the MM model, namely the $\mathrm{MM}^*$ model. In the $\mathrm{MM}^*$ model, if $u$ and $v$ are neighbors of $w$ in the system, then $w$ is obligated to perform the comparison $(u, v)_w$. The paper also introduced a polynomial-time algorithm for identifying faulty nodes in a general system under the MM* model, provided that the system is diagnosable. The $\mathrm{MM}^*$ model has been employed in studies such as \cite{CCC07,CH11,LL17,HL11,HK13,YLMLQZ15}.

The PMC model (Preparata, Metze, and Chien model) is a popular system-level diagnosis model proposed by Preparata et al. \cite{PMC67}. In the PMC model, every vertex can test whether each of its neighboring vertices is faulty. We will only study the PMC model in this paper.

\begin{theorem}[Dahbura and Masson, 1984, \cite{DM84}]\label{th-DM84}
Let $G$ be a graph. For each paired distinct sets $F_1, F_2 \subset V, F_1$ and $F_2$ are distinguishable under the PMC model if and only if there exist $u \in F_1 \triangle F_2$ and $v \in \overline{F_1 \cup F_2}$ such that $u v \in E$; see Fig. \ref{Figure1}.
\end{theorem}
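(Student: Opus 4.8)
The plan is to unwind the definitions so that the biconditional becomes a statement about the existence of a single syndrome. Since $\sigma(F_i)$ is precisely the set of syndromes with which $F_i$ is consistent, the pair $F_1, F_2$ is indistinguishable exactly when some syndrome $\sigma$ is consistent with both $F_1$ and $F_2$ simultaneously. I would therefore prove the equivalent reformulation: a common consistent syndrome exists if and only if there is no edge joining $F_1 \triangle F_2$ to $\overline{F_1 \cup F_2}$. Throughout I use that an edge $uv \in E$ yields tests in both directions, so that $(u,v),(v,u) \in L$.

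For the direction asserting that an edge forces distinguishability, suppose $u \in F_1 \triangle F_2$ and $v \in \overline{F_1 \cup F_2}$ with $uv \in E$; without loss of generality $u \in F_1 \setminus F_2$. The tester $v$ is fault-free under both $F_1$ and $F_2$, so by the consistency condition any syndrome consistent with $F_1$ must report $\sigma(v,u)=1$ (since $u \in F_1$), while any syndrome consistent with $F_2$ must report $\sigma(v,u)=0$ (since $u \notin F_2$). These requirements are incompatible, so no syndrome lies in $\sigma(F_1)\cap\sigma(F_2)$, and the pair is distinguishable.

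For the converse, assume no such edge exists. I would partition $V$ into the four regions $F_1 \cap F_2$, $F_1 \setminus F_2$, $F_2 \setminus F_1$, and $\overline{F_1 \cup F_2}$, and construct a syndrome $\sigma$ by cases on the tester $x$. When $x \in F_1 \cap F_2$ it is faulty under both sets, so its outputs are unconstrained and I set them to $0$. When $x \in F_1 \setminus F_2$ only the $F_2$-consistency condition applies, so I set $\sigma(x,y)=1$ exactly when $y \in F_2$; symmetrically, when $x \in F_2 \setminus F_1$ I set $\sigma(x,y)=1$ exactly when $y \in F_1$. When $x \in \overline{F_1 \cup F_2}$ the tester is fault-free under both sets, so both conditions apply at once, demanding $\sigma(x,y)=1$ exactly when $y \in F_1$ and also exactly when $y \in F_2$.

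The main obstacle, and the only place the hypothesis is used, is checking that this last case is never self-contradictory. A clash would require a tester $x \in \overline{F_1 \cup F_2}$ adjacent to a tested vertex $y$ with $y \in F_1 \triangle F_2$, i.e. an edge from $\overline{F_1 \cup F_2}$ to $F_1 \triangle F_2$, which is exactly what the no-edge assumption forbids. Hence for every tester that is fault-free in both sets the two membership indicators $[y \in F_1]$ and $[y \in F_2]$ coincide, so $\sigma$ is well defined; verifying that the tester-in-$F_1\triangle F_2$ cases satisfy their single surviving consistency condition is then routine. This exhibits a syndrome in $\sigma(F_1)\cap\sigma(F_2)$, proving indistinguishability and completing the equivalence.
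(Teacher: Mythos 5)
Your proof is correct, but there is nothing in the paper to compare it against: the paper states this result as a known theorem of Dahbura and Masson, citing \cite{DM84}, and never proves it --- it is used throughout as a black box. Evaluated on its own merits, your argument is the standard and complete one. The forward direction is right: an edge $uv$ with $u \in F_1 \triangle F_2$ and $v \in \overline{F_1 \cup F_2}$ forces the fault-free tester $v$ to report $\sigma(v,u)=1$ under one faulty set and $\sigma(v,u)=0$ under the other, so no syndrome can be consistent with both. The converse is where the real content lies, and you handle it properly: partitioning testers into the four regions, noting that faulty-in-both testers are unconstrained, that testers in $F_1 \setminus F_2$ (resp. $F_2 \setminus F_1$) face only the $F_2$- (resp. $F_1$-) consistency condition, and that the only possible clash --- a tester in $\overline{F_1 \cup F_2}$ adjacent to a vertex of $F_1 \triangle F_2$ --- is exactly what the no-edge hypothesis excludes, so the two membership indicators agree on all relevant links and the constructed syndrome lies in $\sigma(F_1) \cap \sigma(F_2)$. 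One minor remark: the paper's model assumption that every edge gives tests in both directions (so $(v,u) \in L$ whenever $uv \in E$) is what licenses your use of the test $(v,u)$ in the forward direction; you invoke this explicitly, which is good, since in a general test assignment $T=(V,L)$ without that assumption the forward direction would need the edge to carry a test from the $\overline{F_1 \cup F_2}$ side.
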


This theorem is of vital importance for the study of diagnosis under the PMC model and will be used multiple times in this paper. See \cite{CLH12,CH18,LL17,
PLTH12,
YLMLQZ15} for more papers on the diagnosability under the PMC model.

\begin{figure}[!hbpt]
\begin{center}
\includegraphics[scale=0.6]{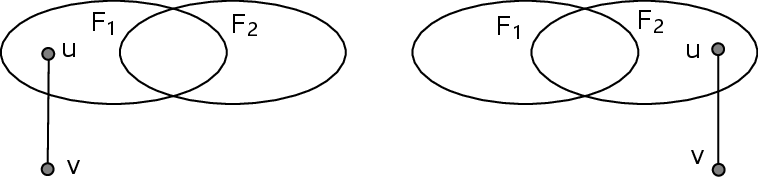}
\end{center}
\caption{Graphs for Theorem \ref{th-DM84}.}\label{Figure1}
\end{figure}

\subsection{$g$-good-neighbor connectivity and diagnosability}

The $g$-good-neighbor connectivity has been an object of interest for many
years, and it was firstly introduced by
Latifi et al. \cite{LHN94}.

\begin{definition}
Let $G=(V,E)$ be connected. A fault set $F\subseteq V$ is called a
\emph{$g$-good-neighbor faulty set} if $|N(v)\cap (V-F)|\geq g$ for
every vertex $v$ in $V-F$, that is, $\delta(G-F)\geq g$, that is, $deg_G(v)\geq g$ for any $v\in V(G-F)$.
\end{definition}

\begin{definition}
A \emph{$g$-good-neighbor cut} of $G$ is
a $g$-good-neighbor faulty set $F$ such that $G-F$ is disconnected.
\end{definition}

\begin{definition}
The minimum cardinality of $g$-good-neighbor cuts is said to be the
\emph{$g$-good-neighbor connectivity} of $G$, denoted by
$\kappa^{g}(G)$.
\end{definition}

\begin{definition}
A connected graph $G$ is said to be
\emph{$g$-good-neighbor connected} if $G$ has a $g$-good-neighbor
cut.
\end{definition}

For more research on $g$-good-neighbor connectivity,
we refer
to \cite{LL17, LXWZ18, PLTH12, WX17, WX18}.

The range of the integer $g$ was determined in \cite{WMHW20}.
\begin{proposition}{\upshape \cite{WMHW20}}\label{pro-g}
Let $g$ be a non-negative integer. If $G$ has $g$-good-neighbor
connectivity and $n$ vertices, then
$$
0\leq g\leq \min\left\{\Delta(G),\left\lfloor
\frac{n-3}{2}\right\rfloor\right\}
$$
and
$$
e(G)\leq {n\choose 2}-(g+1)^2.
$$
\end{proposition}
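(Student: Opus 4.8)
The proposition states bounds on $g$: $0 \le g \le \min\{\Delta(G), \lfloor (n-3)/2 \rfloor\}$ and $e(G) \le \binom{n}{2} - (g+1)^2$.

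Let me think about what's being claimed and how to prove it.The plan is to derive both inequalities directly from the structure of a $g$-good-neighbor cut. First I would fix a $g$-good-neighbor cut $F$, so that $G-F$ is disconnected and $\delta(G-F)\geq g$. Let $C_1$ be a connected component of $G-F$ of smallest order, and let $C_2 = (G-F)-C_1$ collect the remaining components, which is nonempty by disconnectedness. The key observation is that every vertex $v$ in $V-F$ satisfies $deg_{G-F}(v)\geq g$; in particular, every vertex of the smallest component $C_1$ has at least $g$ neighbors \emph{inside} $C_1$ (since edges of $G-F$ from $C_1$ cannot reach $C_2$). This immediately forces $|V(C_1)|\geq g+1$, and symmetrically $|V(C_2)|\geq g+1$.

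For the bound $g\leq \Delta(G)$: since $G-F$ is a subgraph of $G$ and some vertex has degree at least $g$ in $G-F$, that same vertex has degree at least $g$ in $G$, so $\Delta(G)\geq g$. For the bound $g\leq \lfloor (n-3)/2\rfloor$: counting vertices gives $n=|F|+|V(C_1)|+|V(C_2)|\geq |F|+2(g+1)$. The hard part here is pinning down why $|F|\geq 1$ is not enough and one actually needs $|F|\geq 1$ together with the correct rounding; in fact a single isolated separation can occur, so one should argue $|F|\geq 1$ and then $n\geq 2(g+1)+1 = 2g+3$, which rearranges to $g\leq (n-3)/2$, and since $g$ is an integer, $g\leq \lfloor (n-3)/2\rfloor$. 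I expect this counting step to be the main subtlety: one must be careful that $F$ is genuinely nonempty (it is, since removing $\emptyset$ from a connected graph cannot disconnect it) and that both ``sides'' independently contribute at least $g+1$ vertices.

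For the edge bound $e(G)\leq \binom{n}{2}-(g+1)^2$, the plan is to bound the number of \emph{non-edges}. Using the partition $V=V(C_1)\cup V(C_2)\cup F$ with $|V(C_1)|\geq g+1$ and $|V(C_2)|\geq g+1$, observe that there are \emph{no} edges of $G$ between $V(C_1)$ and $V(C_2)$ (any such edge would survive in $G-F$ and connect the two parts). Hence the complement $\overline{G}$ contains the complete bipartite graph between $V(C_1)$ and $V(C_2)$, giving at least $|V(C_1)|\cdot|V(C_2)|\geq (g+1)^2$ missing edges. Therefore
\[
e(G)\leq \binom{n}{2} - |V(C_1)|\cdot|V(C_2)| \leq \binom{n}{2}-(g+1)^2,
\]
which is exactly the claimed inequality. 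The only delicate point is justifying that the $(g+1)\times(g+1)$ block of non-edges is genuinely present; this follows because $F$ separates $C_1$ from $C_2$ in $G$, so all cross edges were removed with $F$, and the minimum-degree condition guarantees each side has at least $g+1$ vertices to supply the product bound.
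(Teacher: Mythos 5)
Your proposal is correct. Note that the paper you were given does not prove Proposition \ref{pro-g} at all: it imports the statement from \cite{WMHW20}, so there is no in-paper proof to compare against. Your argument is the natural one and is sound in every step: a $g$-good-neighbor cut $F$ is nonempty because $G$ is connected; the smallest component $C_1$ and the union $C_2$ of the remaining components each have at least $g+1$ vertices because all $G-F$ neighbors of a vertex lie in its own component; this gives $n\geq 1+2(g+1)$, hence $g\leq\lfloor (n-3)/2\rfloor$, while $g\leq\Delta(G)$ follows since degrees in $G-F$ are dominated by degrees in $G$; and the absence of edges between $C_1$ and $C_2$ in $G$ yields at least $|C_1|\,|C_2|\geq (g+1)^2$ non-edges, which is exactly the stated bound $e(G)\leq\binom{n}{2}-(g+1)^2$. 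This is essentially the standard argument one would expect in \cite{WMHW20}, so there is no substantive methodological divergence to report.
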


Peng et al. \cite{PLTH12} proposed the $g$-good-neighbor conditional diagnosability by claiming that for every fault-free vertex in a system, it has at least $g$ fault-free neighbors.

\begin{definition}
A system $G=(V(G), E)$ is
\emph{$g$-good-neighbor conditional $t$-diagnosable} if each distinct pair of $g$-good-neighbor conditional faulty sets $F_1$ and $F_2$ of $V$ with $\left|F_1\right| \leq t,\left|F_2\right| \leq t$ are distinguishable.
\end{definition}

\begin{definition}
The \emph{$g$-good-neighbor conditional diagnosability} $t^g(G)$ of a graph $G$ is the maximum value of $t$ such that $G$ is $g$-good-neighbor conditionally $t$-diagnosable.
\end{definition}

For some papers, $t^g(G)$ is also called the $g$-good-neighbor conditional diagnosability.

Lin et al. \cite{LXWZ18} investigated the $g$-good-neighbor conditional
diagnosability of arrangement graphs.
Xu et al. \cite{XZL17,XLZHG17} studied the g-good-neighbor diagnosability of complete cubic networks and complete cubic networks. Yuan et al. \cite{YLMLQZ15,YLQZL16}
studied the $g$-good-neighbor diagnosability of the $k$-ary $n$-cube ($k \geq 3$) under the PMC model and the MM${}^*$ model.
Ren and Wang \cite{RW17} gave the $g$-good-neighbor diagnosability of locally twisted cubes.

\subsection{Cartesian product networks}

The \emph{Cartesian product} of two graphs $G$ and $H$, written as
$G\Box H$, is the graph with vertex set $V(G)\times V(H)$, in which
two vertices $(u,v)$ and $(u',v')$ are adjacent if and only if
$u=u'$ and $(v,v')\in E(H)$, or $v=v'$ and $(u,u')\in E(G)$. By symmetry, we have $G\Box H=H\Box G$.

\begin{observation}\label{obs-1}
For $X_1,X_2\subseteq V(G)$ and $Y_1,Y_2\subseteq V(H)$ with $X_1\cap X_2=\emptyset$ and $Y_1\cap Y_2=\emptyset$, we have
$$
(X_1\cup X_2)\Box (Y_1\cup Y_2)=(X_1\Box Y_1)\cup (X_1\Box Y_2)\cup (X_2\Box Y_1)\cup (X_2\Box Y_2).
$$
\end{observation}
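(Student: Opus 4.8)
The plan is to read this as an identity of \emph{vertex sets} inside $V(G\Box H)=V(G)\times V(H)$: for $X\subseteq V(G)$ and $Y\subseteq V(H)$, interpret $X\Box Y$ as the vertex set $\{(x,y):x\in X,\ y\in Y\}$. Under this reading the claim is exactly the distributive law for products of sets over unions, and I would prove it by double inclusion.

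First I would take an arbitrary vertex $(a,b)$ in the left-hand side. By definition $a\in X_1\cup X_2$ and $b\in Y_1\cup Y_2$, so $a\in X_i$ for some $i\in\{1,2\}$ and $b\in Y_j$ for some $j\in\{1,2\}$; hence $(a,b)\in X_i\Box Y_j$, which is one of the four sets on the right. This gives the inclusion $\subseteq$. For the reverse inclusion, note that for every pair $(i,j)$ we have $X_i\subseteq X_1\cup X_2$ and $Y_j\subseteq Y_1\cup Y_2$, so $X_i\Box Y_j\subseteq(X_1\cup X_2)\Box(Y_1\cup Y_2)$; taking the union over all four pairs yields $\supseteq$. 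Combining the two inclusions closes the argument.

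I would then remark on the disjointness hypotheses $X_1\cap X_2=\emptyset$ and $Y_1\cap Y_2=\emptyset$. They are not needed for the equality itself, but they guarantee that the four pieces $X_1\Box Y_1$, $X_1\Box Y_2$, $X_2\Box Y_1$, $X_2\Box Y_2$ are pairwise disjoint, so that the right-hand side is a genuine partition of the left-hand side -- which is presumably what makes the observation convenient for the later counting of cuts in product networks.

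The only real point of care -- and what I would flag as the main obstacle -- is the interpretation of $\Box$ on vertex subsets, since the statement is \emph{false} if one instead reads $X\Box Y$ as the induced subgraph $(G\Box H)[X\times Y]$. Indeed, an edge joining $(u,v)$ and $(u,v')$ with $v\in Y_1$ and $v'\in Y_2$ lies in the left-hand graph but in none of the four blocks on the right, since it runs between $X_i\Box Y_1$ and $X_i\Box Y_2$; for instance, with $G=H=K_2$ and singleton $X_1,X_2,Y_1,Y_2$ the left side would be a $4$-cycle while the right side is edgeless. Hence I would state at the outset that the identity is one of vertex sets and keep the entire proof at that level.
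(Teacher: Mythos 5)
Your proof is correct and is essentially the argument the paper leaves implicit: Observation \ref{obs-1} is stated without proof, since under the vertex-set reading of $\Box$ it is exactly the distributive law for set products over unions, which is what your double-inclusion argument establishes. Your side remarks also match the paper's usage: the disjointness hypotheses serve only to make the four blocks a partition (which is how the paper computes $|X_1|=c^p(G)c^q(H)$ in the proof of Theorem \ref{th-Upper-D}), and the vertex-set interpretation you flag is indeed the one the paper employs there.
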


Let $G$ and $H$ be two connected graphs with $V(G)=\{u_1, u_2, \ldots, u_n\}$ and $V(H)=\left\{v_1, v_2, \ldots, v_m\right\}$, respectively. Then $V(G\square H)=\{(u_i, v_j) \mid 1 \leq i \leq n, 1 \leq j \leq m\}$. For $v \in V(H)$, we use $G(v)$ to denote the subgraph of $G\square H$ induced by the vertex set $\{(u_i, v) \mid 1 \leq i \leq n\}$. Similarly, for $u \in V(G)$, we use $H(u)$ to denote the subgraph of $G\square H$ induced by the vertex set $\{(u, v_j) \mid 1 \leq j \leq m\}$.

Some well-known interconnection networks, e.g., hypercubes, meshes, tori, $k$-ary $n$-cubes, and generalized hypercubes, are product networks
\cite{AS00,BBKA95,BA84}.

Chang et al. \cite{CLTH04} studied the
diagnosability of Cartesian product of two regular networks. Hsieh and Chen \cite{HC08} determined the strong diagnosability of several widely used multiprocessor systems, such as hypercubes, mesh-connected $k$-ary $n$-cubes, torus-connected $k$-ary $n$-cubes, and hyper-Petersen networks.

There are two motivations for this paper. As we can see, there is much research on $g$-good-neighbor connectivity diagnosability. But there are no results on the existence of these two parameters. There are several papers on the diagnosability of two product networks. But there are no results on the diagnosability of Cartesian product of two general networks.

Motivated by studying the existence of $g$-good-neighbor connectivity conditional diagnosability and the Cartesian product of general networks, we study the gc numbers (see Definition \ref{def3-1}), $g$-good-neighbor connectivity, and $g$-good-neighbor diagnosability of general works and general Cartesian networks. The outline of this paper can be seen in Fig. \ref{outline}. As we can see, the existence of $\kappa^{g}(G)$ is very important for studying $t^{g}(G)$, and gc number is a key parameter to compute the $g$-good-neighbor diagnosability.
\begin{itemize}
\item In Section \ref{Section2}, we give two necessary
and sufficient conditions that $t^{g}(G)$ exists
(Theorems \ref{th-condition} and \ref{th-condition-D}).

\item In Section \ref{Section3}, we introduce a new concept called $g$-good neighbour cut-component number
    (gc number for short), denoted by $c^g(G)$, of a graph $G$, and give sharp lower and upper bounds of  $c^g(G)$ of a general graph $G$ in terms of the $g$-good neighbor connectivity $\kappa^g(G)$ (Theorems \ref{th-gc-Lower} and \ref{th-gc-Upper}). These two theorems provide a formula to compute $c^g(G)$ (Corollary \ref{cor-formula-c}).

\item In Section \ref{Section4}, we show the close relation between $t^{g}(G)$ and $c^{g}(G)$ of a graph $G$.
We give an upper bound of $t^{g}(G)$ in terms of gc number (Theorem \ref{th-Upper-gc}), and then derive a formula to compute $t^{g}(G)$ (Theorem \ref{th-Upper-gc-condition}).

\item In Section \ref{Section5}, we derive a sharp upper bound of $t^{g}(G\Box H)$ for two general graphs $G$ and $H$ (Theorem \ref{th-Upper-D}). By Theorem \ref{th-Upper-gc-condition}, we get a formula to compute  $t^{g}(G\Box H)$ for two general graphs $G$ and $H$ (Corollary \ref{th-cor}).

\item As applications, we get the exact values or bounds for the gc numbers and $g$-good-neighbor diagnosability of grid, torus networks and generalized cubes in Section \ref{Section6}.
\end{itemize}

\section{Necessary and sufficient conditions}\label{Section2}

Let $G$ be a graph obtained by a clique $K_g$ and the graph $H$ by adding two vertices $u,w$ such that $H$ is a graph with maximum degree $\Delta(G)<g-2$, $g\geq 3$ and $|V(H)|=r$; see Fig. \ref{Figure3}.
\begin{remark}\label{rem1-1}
One can see that $t^{g}(G) \ (g\geq 3)$ may not exist. Suppose that $F$ is a $g$-good-neighbor conditional faulty set. Then one of the following holds.
\begin{itemize}
\item $V(H)\subseteq F$ and $K_g\cap F=\emptyset$;
\item $u\in F$ and $w\in F$ do not hold.
\end{itemize}
Then $F=V(H)$ or $F=V(H)\cup \{u\}$ or $F=V(H)\cup \{w\}$, and there is no other  $g$-good-neighbor conditional faulty sets in $G$.
\begin{figure}[!hbpt]
\begin{center}
\includegraphics[scale=0.7]{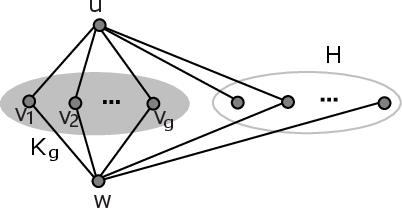}
\end{center}
\begin{center}
\caption{Graphs for Remark \ref{rem1-1}}\label{Figure3}
\end{center}
\end{figure}

For any two different $g$-good-neighbor conditional faulty sets $F_1,F_2\subseteq V(G)$, $F_1=V(H)\cup \{u\}$ or $F_1=V(H)\cup \{w\}$ or $F_2=V(H)\cup \{u\}$ or $F_2=V(H)\cup \{w\}$. Then $(F_1,F_2)$ is a $g$-good-neighbor distinguishable pair in $G$ under the PMC model. For any integer $s\geq r+1$, we have $|F_i|\leq s$ for $i=1,2$, and $(F_1,F_2)$ is a $g$-good-neighbor distinguishable pair in $G$ under the PMC model.
From the arbitrariness of $s$, we have there is no maximum value of $t^{g}(G)$ and so $t^{g}(G)$ can not exist.
\end{remark}

In this subsection, we give two necessary and sufficient conditions that $t^{g}(G)$ exists.

\begin{theorem}\label{th-condition}
Let $G$ be a graph, and let $s$ be a non-negative integer. Then
$t^{g}(G)=s$ if and only if
\begin{center}
$(\mathcal{C})$: there exist two indistinguishable $g$-good-neighbor conditional faulty sets $F_1,F_2$ such that $|F_1|=s+1$ and $|F_2|\leq s+1$. ~~~~~~~~~~~~~~~~~~~~~~~~~~~~~~~~~~~~~~~~~~~~~~~~~~~~~~
\end{center}
\end{theorem}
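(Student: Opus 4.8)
The plan is to exploit a monotonicity property of the predicate $P(t)$ = ``$G$ is $g$-good-neighbor conditionally $t$-diagnosable'' and then translate the two resulting conditions through the Dahbura--Masson characterization (Theorem \ref{th-DM84}). First I would observe that $P$ is monotone: if every pair of $g$-good-neighbor conditional faulty sets of cardinality at most $t+1$ is distinguishable, then so is every such pair of cardinality at most $t$, since the latter pairs form a subfamily of the former. Consequently the set $\{t : P(t)\}$ is a down-closed interval $\{0,1,\dots,t^{g}(G)\}$, and therefore $t^{g}(G)=s$ holds if and only if $P(s)$ holds while $P(s+1)$ fails. The whole proof then reduces to matching these two facts with condition $(\mathcal{C})$.

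For the forward implication I would assume $t^{g}(G)=s$. Since $P(s+1)$ fails, there is a distinct pair of $g$-good-neighbor conditional faulty sets $F_1,F_2$ with $|F_1|\le s+1$ and $|F_2|\le s+1$ that are indistinguishable. Because $P(s)$ holds, these two sets cannot both have cardinality at most $s$; hence at least one of them, say $F_1$ after relabelling, satisfies $|F_1|=s+1$, and then $|F_2|\le s+1$. This is exactly $(\mathcal{C})$.

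For the backward implication I would assume $(\mathcal{C})$ and establish the two inequalities $t^{g}(G)\le s$ and $t^{g}(G)\ge s$ separately. The upper bound is immediate: the indistinguishable pair $F_1,F_2$ supplied by $(\mathcal{C})$ has both cardinalities at most $s+1$, so $P(s+1)$ fails and hence $t^{g}(G)\le s$. The lower bound $t^{g}(G)\ge s$, i.e.\ that $P(s)$ holds, is where the real work lies: one must show that \emph{every} distinct pair of $g$-good-neighbor conditional faulty sets of cardinality at most $s$ is distinguishable. Here I would invoke Theorem \ref{th-DM84}, so that distinguishability of a pair $(A,B)$ becomes the purely combinatorial statement that some edge joins $A\triangle B$ to $\overline{A\cup B}$.

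The main obstacle is precisely this lower bound. A single indistinguishable pair at level $s+1$ does not by itself preclude smaller indistinguishable pairs, and indeed one can often enlarge an indistinguishable pair by adjoining common vertices: the symmetric difference is unchanged and $\overline{A\cup B}$ only shrinks, so indistinguishability survives, and one only has to check that the degree condition $\delta(G-F_i)\ge g$ is maintained. To close the argument I would therefore treat the pair in $(\mathcal{C})$ as realizing the minimal cardinality at which an indistinguishable $g$-good-neighbor pair can occur, and show through Theorem \ref{th-DM84} together with the constraint $\delta(G-F_i)\ge g$ that no indistinguishable pair of cardinality at most $s$ exists. Reconciling the cut structure produced by Dahbura--Masson with the good-neighbor degree requirement is the step I expect to demand the most care.
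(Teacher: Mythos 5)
Your skeleton matches the paper's: the forward implication (failure of $(s+1)$-diagnosability plus success of $s$-diagnosability forces a pair with $|F_1|=s+1$, $|F_2|\le s+1$) and the upper bound $t^{g}(G)\le s$ in the backward implication are exactly the paper's arguments, and both are correct. You have also put your finger on the real issue: a single indistinguishable pair at level $s+1$ does not preclude smaller ones. But your plan for closing that gap --- showing via Theorem \ref{th-DM84} and the constraint $\delta(G-F_i)\ge g$ that no indistinguishable pair of cardinality at most $s$ can exist --- cannot succeed, because the backward implication as literally stated is false. Take $G=P_n$ ($n\ge 6$, vertices $v_1v_2\cdots v_n$) with $g=0$, so that every vertex subset is a $0$-good-neighbor conditional faulty set. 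The pair $F_1=\{v_1,v_2,v_3,v_4\}$, $F_2=\{v_2,v_3,v_4\}$ has $F_1\triangle F_2=\{v_1\}$ with no neighbor outside $F_1\cup F_2$, hence is indistinguishable by Theorem \ref{th-DM84}, and $(\mathcal{C})$ holds with $s=3$; yet $t^{0}(P_n)=1$, since $(\{v_1,v_2\},\{v_2\})$ is already an indistinguishable pair of sizes $2$ and $1$. So no amount of care with the cut structure will extract $t^{g}(G)\ge s$ from $(\mathcal{C})$ alone.

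What the paper actually does is strengthen the hypothesis mid-proof: in the backward direction it says ``we choose $F_1,F_2$ such that $s$ is minimum,'' i.e., it proves the (correct) statement that $t^{g}(G)=s$ if and only if $s$ is the \emph{least} integer for which $(\mathcal{C})$ holds. Your phrase ``treat the pair in $(\mathcal{C})$ as realizing the minimal cardinality'' is the same fix, but note that once minimality is assumed there is nothing left to reconcile: if some pair $(A,B)$ of $g$-good-neighbor conditional faulty sets with $|A|,|B|\le s$ were indistinguishable, then setting $t+1=\max\{|A|,|B|\}\le s$ it would witness $(\mathcal{C})$ at level $t<s$, contradicting minimality; hence $G$ is $s$-diagnosable and $t^{g}(G)\ge s$. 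This is a one-line consequence of the definition, not a Dahbura--Masson argument, so the ``hard step'' you anticipate evaporates --- but only because the theorem being proved is the minimality-corrected one, not the one as stated.
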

\begin{proof}
Suppose that $t^g(G)=s$. If there does not exist two indistinguishable $g$-good-neighbor conditional faulty sets $F_1,F_2$ such that $|F_1|=s+1$ and $|F_2|\leq s+1$, then $t^{g}(G)=s+1$, which contradicts the fact that $t^{g}(G)=s$.

Suppose that there exist two indistinguishable $g$-good-neighbor conditional faulty sets $F_1,F_2$ such that $|F_1|=s+1$ and $|F_2|\leq s+1$. We choose $F_1,F_2$ such that $s$ is minimum. Then $t^g(G)\leq |F_1|-1=s$. Let $t^{g}(G)=t$. Then there exist two indistinguishable $g$-good-neighbor conditional faulty sets $F_1',F_2'$ such that $|F_1'|=t+1$ and $|F_2'|\leq t+1$.
If $t<s$, then $|F_1'|=t+1<s+1$, which contradicts the fact that $s$ is minimum. Then
$t^{g}(G)=t=s$.
\end{proof}

\begin{corollary}
Let $g$ be a non-negative integer. If $G$ has $g$-good-neighbor conditional diagnosability, then
$$
0 \leq g \leq \Delta(G).
$$
\end{corollary}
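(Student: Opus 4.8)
The plan is to prove the two inequalities separately. The lower bound $0 \le g$ is immediate, since $g$ is assumed to be a non-negative integer, so the entire content lies in the upper bound $g \le \Delta(G)$. I would establish this by contradiction: assume that $G$ has $g$-good-neighbor conditional diagnosability (so $t^g(G)$ exists) and yet $g \ge \Delta(G)+1$, and then show that $t^g(G)$ cannot exist after all.

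The key observation is that when $g > \Delta(G)$ there is essentially no room for a genuine $g$-good-neighbor faulty set. Recall that $F \subseteq V$ is a $g$-good-neighbor (conditional) faulty set precisely when $\delta(G-F) \ge g$. First I would show that no proper subset $F \subsetneq V$ can satisfy this: if $V - F \ne \emptyset$, pick any $v \in V-F$; then $deg_{G-F}(v) \le deg_G(v) \le \Delta(G) < g$, so $\delta(G-F) < g$ and $F$ fails the condition. Hence the only conceivable $g$-good-neighbor faulty set is $F = V$, and in particular there do not exist two \emph{distinct} $g$-good-neighbor conditional faulty sets of $G$.

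With at most one $g$-good-neighbor conditional faulty set available, there is no distinct pair $F_1, F_2$ at all, and so in particular no indistinguishable pair of $g$-good-neighbor conditional faulty sets. Consequently condition $(\mathcal{C})$ of Theorem \ref{th-condition} fails for every non-negative integer $s$, which by that theorem means $t^g(G) \ne s$ for all $s$; that is, $t^g(G)$ does not exist. This contradicts the hypothesis that $G$ has $g$-good-neighbor conditional diagnosability, so $g \le \Delta(G)$, completing the bound $0 \le g \le \Delta(G)$.

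The step I expect to require the most care is the bookkeeping at the boundary: making explicit that a $g$-good-neighbor conditional faulty set is exactly a set $F$ with $\delta(G-F) \ge g$ (so that the degree inequality $deg_{G-F}(v) \le \Delta(G) < g$ applies), and confirming that ``at most one such set'' genuinely forces the \emph{non-existence} of $t^g(G)$ rather than, say, $t^g(G)=0$. Invoking Theorem \ref{th-condition} in its negated form is the cleanest way to handle this last subtlety, since it translates ``no indistinguishable pair of the required sizes'' directly into the failure of $t^g(G)$ to take any value.
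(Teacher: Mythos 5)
Your proof is correct and is exactly the argument the paper intends: the corollary is stated immediately after Theorem \ref{th-condition} with no written proof, and the intended derivation is precisely your observation that when $g>\Delta(G)$ every vertex surviving in $G-F$ would need degree at least $g>\Delta(G)$, so $V$ is the only $g$-good-neighbor conditional faulty set, no indistinguishable pair exists, and the negated form of Theorem \ref{th-condition} rules out $t^g(G)=s$ for every $s$ (the same mechanism of non-existence used in Remark \ref{rem1-1}). Your closing caution about distinguishing ``$t^g(G)$ does not exist'' from ``$t^g(G)=0$'' is handled correctly by quantifying over all $s$.
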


\begin{theorem}\label{th-condition-D}
Let $G$ be a graph, and let $s$ be a non-negative integer. Then
$t^{g}(G)$ exists if and only if
\begin{center}
$(\mathcal{D})$: $\kappa^g(G)$ exists or there exist two indistinguishable $g$-good-neighbor conditional faulty sets $F_1,F_2$ such that $F_1\cup F_2=V(G)$. ~~~~~~~~~~~~~~~~~~~~~~~~~~~~~~~~~~~~~~~~~~~~~~~~~~~~~~
\end{center}
\end{theorem}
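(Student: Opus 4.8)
The plan is to prove Theorem~\ref{th-condition-D} by characterizing precisely when the maximum value defining $t^g(G)$ fails to exist. By the definition of $t^g(G)$ as the maximum $t$ for which $G$ is $g$-good-neighbor conditionally $t$-diagnosable, the parameter $t^g(G)$ exists if and only if the set of admissible $t$ is nonempty and bounded above. The natural strategy is to show that the failure of existence happens exactly when one can inflate the size of the faulty sets arbitrarily while keeping every pair of distinct $g$-good-neighbor conditional faulty sets distinguishable---this is exactly the phenomenon isolated in Remark~\ref{rem1-1}. So I would first reduce the statement to: $t^g(G)$ fails to exist if and only if condition $(\mathcal{D})$ fails, i.e.\ if and only if $\kappa^g(G)$ does not exist \emph{and} every pair of distinguishable-or-not faulty sets whose union is $V(G)$ is in fact distinguishable.

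The forward direction ($t^g(G)$ exists $\Rightarrow (\mathcal{D})$) I would argue contrapositively. Assume $(\mathcal{D})$ fails. Then $\kappa^g(G)$ does not exist, meaning $G$ has no $g$-good-neighbor cut, so removing any $g$-good-neighbor faulty set never disconnects $G$; and simultaneously no indistinguishable pair $F_1,F_2$ with $F_1\cup F_2=V(G)$ exists. The goal is to build, for each large integer $s$, a pair of distinct $g$-good-neighbor conditional faulty sets $F_1,F_2$ with $|F_1|\le s$, $|F_2|\le s$ that are nevertheless distinguishable for \emph{every} such pair---forcing $G$ to be $g$-good-neighbor conditionally $s$-diagnosable for arbitrarily large $s$, so that no maximum $t$ exists. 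The key tool is Theorem~\ref{th-DM84}: a pair $(F_1,F_2)$ is indistinguishable precisely when there is no edge from $F_1\triangle F_2$ to $\overline{F_1\cup F_2}$. When $F_1\cup F_2\neq V(G)$ so that $\overline{F_1\cup F_2}\neq\emptyset$, indistinguishability forces $F_1\cap F_2$ (together with the structure of $G-(F_1\cap F_2)$) to behave like a $g$-good-neighbor cut separating $F_1\triangle F_2$ from $\overline{F_1\cup F_2}$; the nonexistence of $\kappa^g(G)$ rules this out. When $F_1\cup F_2=V(G)$, the second clause of the negation of $(\mathcal{D})$ rules out indistinguishability directly. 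Hence every pair is distinguishable regardless of cardinality bound, giving unbounded $s$.

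For the converse ($(\mathcal{D})\Rightarrow t^g(G)$ exists) I would show that each disjunct of $(\mathcal{D})$ produces, for some finite $s$, an indistinguishable pair, which via Theorem~\ref{th-condition} caps $t^g(G)$. If $\kappa^g(G)$ exists, take a minimum $g$-good-neighbor cut $S$; then $G-S$ has at least two components, and choosing $F_1=S\cup A$ and $F_2=S\cup B$ where $A,B$ are suitable unions of components of $G-S$ yields $F_1\triangle F_2$ separated from $\overline{F_1\cup F_2}$ by $S$, so by Theorem~\ref{th-DM84} the pair is indistinguishable, and both sets are $g$-good-neighbor conditional faulty sets of bounded size. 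If instead there exist indistinguishable $g$-good-neighbor conditional faulty sets with $F_1\cup F_2=V(G)$, these directly witness a finite bound. In either case $t^g(G)$ is bounded above and, being a maximum over a nonempty bounded set of integers, exists.

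The main obstacle I anticipate is the forward direction, specifically verifying that the candidate sets $F_1,F_2$ I construct in the unbounded-$s$ argument genuinely remain \emph{$g$-good-neighbor conditional} faulty sets as their size grows---that is, that $\delta(G-F_i)\ge g$ continues to hold---while also confirming distinguishability. The delicate point is translating ``no $g$-good-neighbor cut'' into ``no indistinguishable pair with proper union''; one must carefully use that in an indistinguishable pair with $\overline{F_1\cup F_2}\neq\emptyset$, the set $F_1\cap F_2$ must itself be a $g$-good-neighbor faulty set whose removal disconnects $F_1\triangle F_2$ from $\overline{F_1\cup F_2}$, contradicting the nonexistence of $\kappa^g(G)$. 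Handling the boundary cases where $F_1\triangle F_2$ or $\overline{F_1\cup F_2}$ is small, and ensuring the minimum-degree condition is not violated, is where the argument requires the most care.
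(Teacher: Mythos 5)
Your proposal is correct and follows essentially the same route as the paper: the converse direction builds an indistinguishable pair from a $g$-good-neighbor cut (the paper takes $F_1=C_1\cup X$, $F_2=X$, a special case of your $S\cup A$, $S\cup B$ construction), and your contrapositive forward direction rests on exactly the paper's key step, namely that an indistinguishable pair with $F_1\cup F_2\neq V(G)$ forces $F_1\cap F_2$ to be a nonempty $g$-good-neighbor cut (with connectivity of $G$ disposing of the case $F_1\cap F_2=\emptyset$, which is the paper's Claim). The boundary issues you flag --- emptiness of $F_1\cap F_2$ and verifying $\delta\bigl(G-(F_1\cap F_2)\bigr)\geq g$ --- are precisely the details the paper's proof handles (or glosses over), so filling them in as you outline completes the argument.
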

\begin{proof}
Suppose that the condition $(\mathcal{D})$ holds. Suppose that $\kappa^g(G)$ exists. Then there exists a cutset $X\subseteq V(G)$ such that the minimum degree of each component of $G-X$ is at least $g$. Let $C_1,C_2,...,C_r$ be the connected components of $G-X$. Without loss of generality, let $|C_1|\leq |C_2|\leq \cdots \leq |C_r|$. Let $F_1=C_1\cup X$ and $F_2=X$. Then $(F_1,F_2)$ is a  indistinguishable $g$-good-neighbor conditional pair. From Theorem \ref{th-condition}, $t_g(G)$ exists.
Suppose that there exist two indistinguishable $g$-good-neighbor conditional faulty sets $F_1,F_2$ such that $F_1\cup F_2=V(G)$. Then $(F_1,F_2)$ is a  indistinguishable $g$-good-neighbor conditional pair. From Theorem \ref{th-condition}, $t^g(G)$ exists.

Suppose that $t^g(G)$ exists. From Theorem \ref{th-condition}, there exist two indistinguishable $g$-good-neighbor conditional faulty sets $F_1,F_2$ such that $|F_1|=s+1$ and $|F_2|\leq s+1$.

If $F_1\cup F_2=V(G)$, then the result follows.

Suppose that $F_1\cup F_2\neq V(G)$. Then $\overline{F_1\cup F_2}\neq \emptyset$.
\begin{claim}\label{claim-D}
$F_1\cap F_2\neq \emptyset$.
\end{claim}
\begin{proof}
Otherwise, we assume that $F_1\cap F_2= \emptyset$. Since $F_1\cup F_2\neq V(G)$, it follows that
there exists a vertex $u\in \overline{F_1\cup F_2}$ adjacent to a vertex $u'\in F_1\cup F_2$, since $G$ is connected. From Theorem \ref{th-DM84}, $(F_1,F_2)$ is a distinguishable $g$-good-neighbor conditional pair, a contradiction.
\end{proof}

From Claim \ref{claim-D}, we have
$F_1\cap F_2\neq \emptyset$. Since $(F_1,F_2)$ is an indistinguishable $g$-good-neighbor conditional pair, it follows that there does not exist an edge from $\overline{F_1\cup F_2}$ to $F_1\triangle F_2$. Then $F_1\cap F_2$ is a $g$-good-neighbor cutset of $G$. Then $\kappa^g(G)$ exists.
\end{proof}

\section{$g$-good neighbour cut-component numbers}\label{Section3}

Let $C_1,C_2,...,C_r$ be the connected components of $G-X$, where $X$ is a $g$-good-neighbor cutset of $G$ with $|X|\geq \kappa^g(G)$. Let $\mathcal{X}=\{X\,|\,|X|\geq \kappa^g(G)\}$.

We assume that there exists an integer $s$ such that
\begin{itemize}
    \item For each $C_i \ (1\leq i\leq s)$, there exists a partition of $C_i$: two vertex subsets $A_i,B_i$ such that $A_i\cup B_i=C_i$ with $\delta(G[A_i])\geq g$ and $\delta(G[B_i])\geq g$ for $j=1,2$. In this partition, we find such two subsets $A_i,B_i$ such that $|A_i|-|B_i|$ is minimum, where $0\leq |A_i|-|B_i|\leq |C_i|-2g-2$ we assume that $|A_i|\geq |B_i|$.

    \item For each $C_i \ (s+1\leq i\leq r)$, there does not exist such a partition of $C_i$.
\end{itemize}

Without loss of generality, let $|A_1|\leq |A_2|\leq \cdots \leq |A_s|$ and $|C_{r}|\leq |C_{r-1}|\leq \cdots \leq |C_{s+1}|$.
Let $\mathcal{A}(X)=\{A_1,A_2,...,A_s\}$ and $\mathcal{C}(X)=\{C_{s+1},C_{s+2},...,C_{r}\}$.

\begin{definition}\label{def3-1}
The \textit{$g$-good neighbour cut-component number} (\textit{gc number} for short) is defined as $c^g(G)=\min\{\min\{|A_1|,|C_r|\}+|X|\,:\,X\in \mathcal{X},$~$A_1\in \mathcal{A}(X)$,~$C_r\in \mathcal{C}(X)$~and $\delta(G-X) \geq g\}$.
\end{definition}

Let $\mathcal{G}_1$ be a graph class such that $G\in \mathcal{G}_1$ if and if only if there exists a $g$-good neighbor cut $X$ of $G$ such that
$c^g(G)=|C_r|+|X|$.
Let $\mathcal{G}_2$ be a graph class such that $G\in \mathcal{G}_2$ if and if only if there exists a $g$-good neighbor cut $X$ of $G$ such that
$c^g(G)=|A_1|+|X|$. If $G\in \mathcal{G}_1$ and $G\in \mathcal{G}_2$, then we let $G\in \mathcal{G}_1$.

\begin{algorithm}[H]
\footnotesize
\caption{The algorithm   determining $c^g(G)$}
\label{algorithm:Caculate Stain tree}
\LinesNumbered 
\KwIn{A graph $G$
and $X\subseteq V(G)$ with $|X|\geq \kappa^g(G)$;}
\KwOut{$c^g(G)$ ;}
$C \gets |V(G)|$\;
$a \gets |V(G)|$\;
$d \gets |V(G)|$\;
$t \gets 0$\;
$A_1 \gets \emptyset$\;
\For{each vertex set $X\subseteq V(G)$
 with $|X|\geq \kappa^g(G)$}{

$\mathcal{C}$ $ \gets$ the  connected components set of $G\setminus X$\;
$\mathcal{T} \gets$ $\mathcal{C}$ \;
\For{each vertex set $C\subseteq \mathcal{C}$
}{
\For{each vertex set $A\subseteq C$
}{
\If{$A$ such with $\delta(G[A])\geq g$ ,$\delta(G[C\setminus A])\geq g$ and $|A|\geq |C\setminus A|$
}{
$d_1\gets$$|A|-|C\setminus A|$ \;

\If{$d_1 < d$}{
$d$ $ \gets$ $d_1$ \;
$A_1$ $ \gets$ $A$ \;
 }
 $\mathcal{T}  \gets \mathcal{T} \setminus C$\;
    }
}
\If{$|A_1| < a$}{
 $a$ $ \gets$ $|A_1|$\;
 }
}
$t$ $ \gets$ the minimum cardinal vertex subset in $\mathcal{T}$ \;

$T_1$ $ \gets$  $|X|+\min\{t,a\}$\;

\If{$T_1 < C$}{
 $C$ $ \gets$ $T_1$\;
 }
}
$c^g(G) \gets  C$ \;
\Return $c^g(G)$ \;
\end{algorithm}

We can give a lower bound of $c^g(G)$ in terms of $\kappa^g(G)$.

\begin{theorem}\label{th-gc-Lower}
Let $G$ be a connected graph, and let $g$ be an integer with $g\geq 0$. If $\kappa^g(G)$ exists, then
$$
c^g(G)\geq \kappa^g(G)+g+1.
$$
Moreover, the equality holds if and only if there exists a cutset $X=\kappa^g(G)$ such that the minimum degree of each connected component of $G-X$ is at least $g$ and one of the following condition holds.
\begin{itemize}
    \item one of connected components is a clique $K_{g+1}$.

    \item one of connected components is obtained from two cliques $K_{g+1}$ by adding at least one edge between them.
\end{itemize}
\end{theorem}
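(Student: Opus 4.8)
The plan is to treat the inequality and the equality case separately, with a single elementary observation doing most of the work: \emph{any graph with minimum degree at least $g$ has at least $g+1$ vertices}, and it has exactly $g+1$ vertices only when it is $K_{g+1}$ (each vertex needs $g$ neighbours out of at most $g$ others).

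For the bound, I would fix an arbitrary admissible cutset $X$ in the minimisation defining $c^g(G)$, so $|X|\geq\kappa^g(G)$ and $\delta(G-X)\geq g$. Every half $A_1\in\mathcal{A}(X)$ satisfies $\delta(G[A_1])\geq g$ by construction, and every non-splittable component $C_r\in\mathcal{C}(X)$ inherits $\delta(G[C_r])\geq g$ from $\delta(G-X)\geq g$; hence $|A_1|\geq g+1$ and $|C_r|\geq g+1$. Thus $\min\{|A_1|,|C_r|\}\geq g+1$ whichever term realises the minimum (and also when only one of $\mathcal{A}(X),\mathcal{C}(X)$ is nonempty, in which case that single quantity is the one used, matching the algorithm). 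Adding $|X|\geq\kappa^g(G)$ and minimising over $X$ gives $c^g(G)\geq\kappa^g(G)+g+1$.

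For equality I would argue that, since both summands are individually bounded below, the identity $c^g(G)=\kappa^g(G)+g+1$ forces the optimal $X$ to satisfy $|X|=\kappa^g(G)$ (so $X$ is a minimum $g$-good-neighbour cut) and $\min\{|A_1|,|C_r|\}=g+1$. If the minimum is realised by $|C_r|=g+1$, the observation forces $G[C_r]=K_{g+1}$, giving the first structural condition. If it is realised by $|A_1|=g+1$, then $G[A_1]=K_{g+1}$, and the complementary half $B_1=C\setminus A_1$ satisfies $g+1\geq|B_1|\geq g+1$ (upper bound from $|A_1|\geq|B_1|$, lower bound from $\delta(G[B_1])\geq g$), so $G[B_1]=K_{g+1}$ as well; since $C$ is a connected component of $G-X$, at least one edge joins $A_1$ and $B_1$, giving the second condition. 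Conversely, given a minimum cut $X$ with $\delta(G-X)\geq g$ and one of the two components, I would check that a $K_{g+1}$-component is genuinely non-splittable (any proper part has fewer than $g+1$ vertices, hence minimum degree $<g$) and so lands in $\mathcal{C}(X)$ with $|C_r|=g+1$, while a double-clique component admits the balanced split into its two copies of $K_{g+1}$, which has difference $0$ and is therefore the chosen split, contributing $|A_1|=g+1$; in either case $\min\{|A_1|,|C_r|\}+|X|=(g+1)+\kappa^g(G)$, which together with the lower bound yields equality.

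The main obstacle I anticipate is not the arithmetic but the bookkeeping forced by the definitions of $\mathcal{A}(X)$ and $\mathcal{C}(X)$: I must confirm that the $K_{g+1}$-component is classified as non-splittable so that it really appears as a $C_r$, and that the double-clique component's \emph{minimum-difference} split is exactly the balanced one producing $|A_1|=g+1$ (rather than some unbalanced split that might overshoot $g+1$), and I should also dispose of the degenerate cases where $\mathcal{A}(X)$ or $\mathcal{C}(X)$ is empty so that $\min\{|A_1|,|C_r|\}$ is read consistently as the single available value.
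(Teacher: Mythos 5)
Your proposal is correct and follows essentially the same route as the paper's proof: the lower bound via the observation that minimum degree at least $g$ forces at least $g+1$ vertices, equality forcing $|X|=\kappa^g(G)$ and $\min\{|A_1|,|C_r|\}=g+1$ and hence the two clique structures, and the converse by exhibiting the cut and bounding $c^g(G)$ from above. In fact you are somewhat more careful than the paper on the bookkeeping points (the non-splittability of a $K_{g+1}$ component, the step $|B_1|=g+1$, and the balanced split of the double-clique component), which the paper's proof glosses over.
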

\begin{proof}
From the definition of $c^g(G)$, there exists a cutset $X\subseteq V(G)$ with $|X|\geq \kappa^g(G)$ and a vertex subset $C_r$ or $A_1$ such that
\begin{equation}\label{eq-1}
c^g(G)=|X|+\min\{|C_1|,|A_1|\}\geq \kappa^g(G)+g+1
\end{equation}
and  $|N_G(v) \cap(V(G)-X)| \geq g$ for every vertex $v$ in $V(G)-X$.

Suppose that there exists a cutset $X=\kappa^g(G)$ such that the minimum degree of each connected component of $G-X$ is at least $g$ and one of connected components is a clique $K_{g+1}$. Then $
c^g(G)\leq |X|+|K_{g+1}|=\kappa^g(G)+g+1
$, and hence $
c^g(G)= \kappa^g(G)+g+1
$ by this theorem.

Suppose that there exists a cutset $X=\kappa^g(G)$ such that the minimum degree of each connected component of $G-X$ is at least $g$ and one of connected components, say $C_j$, is obtained from two cliques $K_{g+1}$ by adding at least one edge between them. Clearly, we have $A_j=V(K_{g+1})$ and $B_j=V(K_{g+1})$
Then $
c^g(G)\leq |X|+|V(K_{g+1})|=\kappa^g(G)+g+1
$, and hence $
c^g(G)= \kappa^g(G)+g+1
$ by this theorem.

Conversely, we suppose that $
c^g(G)= \kappa^g(G)+g+1
$.
From the definition of $c^g(G)$, there exists a cutset $X\subseteq V(G)$ with $|X|\geq \kappa^g(G)$ and a vertex subset $A_1$ or $C_r$ such that $|C_r|+|X|=\kappa^g(G)+g+1$ or $|A_1|+|X|=\kappa^g(G)+g+1$ and the minimum degree of each connected component of $G-X$ is at least $g$. Then $|X|\geq  \kappa^g(G)$, and $|C_r|\geq g+1$ or $|A_1|\geq g+1$, and hence $|X|=\kappa^g(G)$, and $|C_r|=g+1$ or $|A_1|=g+1$. Since the minimum degree of $C_r$ is at least $g$, it follows that $C_r=K_{g+1}$. Since $\delta(G[A_1])\geq g$ and $\delta(G[B_1])\geq g$, it follows that the connected component containing $A_1$ is obtained from two cliques $K_{g+1}$ by adding at least one edge between them.
\end{proof}

\begin{corollary}\label{pro1}
Let $T$ be a tree.
If $g=0$, then
$c^g(T)=2$.
\end{corollary}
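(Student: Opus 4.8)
The plan is to read this as a direct application of Theorem \ref{th-gc-Lower}, exploiting the fact that $g=0$ trivializes every good-neighbor requirement. First I would note that when $g=0$ the condition $|N(v)\cap(V-F)|\geq g$ holds for \emph{every} subset $F$ and every $v\in V-F$, so a $0$-good-neighbor cut is nothing more than an ordinary vertex cut and $\kappa^g(T)$ coincides with the usual connectivity $\kappa(T)$. For any tree on at least two vertices $\kappa(T)=1$, and hence $\kappa^g(T)=1$ exists. This makes Theorem \ref{th-gc-Lower} applicable and immediately yields the lower bound
$$
c^g(T)\geq \kappa^g(T)+g+1 = 1+0+1 = 2.
$$

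For the matching upper bound I would invoke the ``moreover'' clause of Theorem \ref{th-gc-Lower}, exhibiting a cutset that realizes equality. Assuming $T$ has at least three vertices, I pick a leaf $\ell$ together with its neighbour $v$; since $T$ is a tree on $\geq 3$ vertices, $v$ has degree at least $2$, so $X=\{v\}$ is a cut with $|X|=1=\kappa^g(T)$, and one connected component of $T-X$ is exactly the singleton $\{\ell\}$. Because $g=0$, this component is the clique $K_1=K_{g+1}$, and the minimum degree of every component of $T-X$ is trivially at least $g=0$. This is precisely the first bullet of the equality condition in Theorem \ref{th-gc-Lower}, so the inequality above is tight and $c^g(T)=\kappa^g(T)+g+1=2$.

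The only genuine subtlety — and thus the step I would verify most carefully — is the correct classification of the singleton component in the machinery preceding Definition \ref{def3-1}. With $g=0$ the balance constraint $0\leq |A_i|-|B_i|\leq |C_i|-2g-2$ forces each part of any admissible partition to have at least $g+1=1$ vertex, so a size-$1$ component admits no such partition and legitimately belongs to $\mathcal{C}(X)$ as a candidate $C_r$ with $|C_r|=1$. This is exactly what makes $\min\{|A_1|,|C_r|\}+|X|=1+1=2$ attainable.

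The remaining points are only the degenerate small cases. For $T=K_2$ one relies on the convention built into the definition of $\kappa$ — that removing a vertex to leave a single vertex counts — so the size-$1$ leaf component again supplies $C_r=K_1$ and gives $c^g(K_2)=2$; the trivial tree $K_1$ is excluded since it has no cut and $\kappa^g$ does not exist. I do not expect any real obstacle here: the whole argument is a short specialization of Theorem \ref{th-gc-Lower}, with the main care going into the $g=0$ collapse of the definitions and into confirming that a leaf provides a $K_{g+1}$ component.
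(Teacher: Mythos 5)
Your proof is correct and takes essentially the same route the paper intends: the corollary is stated as an immediate consequence of Theorem \ref{th-gc-Lower}, with the lower bound $c^0(T)\geq\kappa^0(T)+g+1=2$ and equality realized by deleting the neighbour of a leaf, which isolates a singleton component $K_1=K_{g+1}$ (your check that a singleton lands in $\mathcal{C}(X)$ under the partition constraint is the right subtlety to verify). The one inaccuracy is your treatment of $T=K_2$: under the paper's definitions a $0$-good-neighbor cut must actually disconnect the graph, so $\kappa^0(K_2)$, and hence $c^0(K_2)$, does not exist at all (consistent with Proposition \ref{pro-g}, which forces $n\geq 3$ when $g=0$) rather than equalling $2$ via the ``single vertex'' convention for $\kappa$; this degenerate case lies outside what the corollary implicitly assumes, so it does not affect the substance of your argument.
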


We can give an upper bound of $c^g(G)$ in terms of $\kappa^g(G)$.

\begin{theorem}\label{th-gc-Upper}
Let $G$ be a connected graph, and let $g$ be an integer with $g\geq 0$. If $\kappa^g(G)$ exists, then
$$
c^g(G)\leq  \kappa^g(G)+\min\{|C_r|,|A_1|\},
$$
where $X$ is a $g$-good neighbor cutset of $G$ with $|X|=\kappa^g(G)$ and $C_r$ is a minimal connected component of $G-X$ in $\mathcal{C}$ and $A_1$ is a minimal vertex subset of $G-X$ in $\mathcal{A}$.

Moreover, the bound is sharp.
\end{theorem}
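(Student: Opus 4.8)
The plan is to observe that the inequality is essentially forced by the definition of $c^g(G)$, and then to exhibit graphs attaining equality.

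First I would unwind Definition \ref{def3-1}. The quantity $c^g(G)$ is a minimum of the objective $|X| + \min\{|A_1|, |C_r|\}$ taken over all admissible cutsets, namely all $X \in \mathcal{X}$ (so $X$ is a $g$-good-neighbor cutset with $|X| \geq \kappa^g(G)$) satisfying $\delta(G-X) \geq g$, where for each such $X$ the sets $A_1$ and $C_r$ denote the minimal elements of $\mathcal{A}(X)$ and $\mathcal{C}(X)$. To bound a minimum from above it suffices to evaluate the objective at one admissible choice. Since $\kappa^g(G)$ exists, there is a $g$-good-neighbor cutset $X$ of $G$ with $|X| = \kappa^g(G)$; being a $g$-good-neighbor faulty set, it satisfies $\delta(G-X) \geq g$, so $X \in \mathcal{X}$ is admissible. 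Evaluating the objective at this particular $X$ gives exactly $\kappa^g(G) + \min\{|C_r|, |A_1|\}$, with $C_r, A_1$ the minimal elements of $\mathcal{C}(X), \mathcal{A}(X)$, whence $c^g(G) \leq \kappa^g(G) + \min\{|C_r|, |A_1|\}$, which is the asserted bound.

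The one point requiring care is the degenerate situation in which one of $\mathcal{A}(X)$, $\mathcal{C}(X)$ is empty (every component of $G-X$ admits the required balanced partition, or none does). In that case I would read $\min\{|C_r|, |A_1|\}$ as the size of the unique available minimal set, and the same substitution argument goes through unchanged; I would state this convention explicitly so that the objective is well defined for every admissible $X$.

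For sharpness I would reuse the extremal configurations from Theorem \ref{th-gc-Lower}. Take a connected graph $G$ possessing a minimum $g$-good-neighbor cutset $X$ (so $|X| = \kappa^g(G)$) for which one component of $G-X$ is a clique $K_{g+1}$. Such a component cannot be split into two parts of minimum degree $\geq g$, since each part would need at least $g+1$ vertices while $K_{g+1}$ has only $g+1$; hence it lies in $\mathcal{C}(X)$ as its minimal element, giving $|C_r| = g+1$ and $\min\{|C_r|,|A_1|\} \leq g+1$. The upper bound then yields $c^g(G) \leq \kappa^g(G) + g + 1$, while Theorem \ref{th-gc-Lower} supplies the matching lower bound $c^g(G) \geq \kappa^g(G) + g + 1$, so equality holds and the bound is sharp. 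The simplest instance is a tree with $g = 0$, where $\kappa^0 = 1$ and a leaf-side component is a single vertex $K_1$, recovering $c^0 = 2$ as in Corollary \ref{pro1}. The main obstacle is essentially absent from the inequality itself, which is immediate from the minimum in Definition \ref{def3-1}; the only genuine work is the bookkeeping of the empty-class cases and pinning down a clean witness for sharpness, and the clique-component examples from Theorem \ref{th-gc-Lower} provide the latter directly.
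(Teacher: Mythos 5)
Your proof of the inequality itself is essentially the paper's proof: since $\kappa^g(G)$ exists, a minimum $g$-good-neighbor cut $X$ is an admissible choice in Definition \ref{def3-1} (it satisfies $|X|\geq \kappa^g(G)$ and $\delta(G-X)\geq g$), and evaluating the minimized objective at this $X$ yields $c^g(G)\leq \kappa^g(G)+\min\{|C_r|,|A_1|\}$; your explicit convention for the degenerate case where $\mathcal{A}(X)$ or $\mathcal{C}(X)$ is empty is a small point the paper leaves implicit, and it is a reasonable addition. Where you genuinely diverge is the sharpness claim. The paper exhibits an explicit family $H_n$, built from two $g$-regular connected graphs $D_{\lfloor (n-r)/2\rfloor}$ and $D_{\lceil (n-r)/2\rceil}$ joined through $r$ new degree-$2$ vertices, and argues that every $g$-good-neighbor cut must be exactly the set of new vertices, forcing $c^g(H_n)=\kappa^g(H_n)+\lfloor (n-r)/2\rfloor$. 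You instead recycle the equality case of Theorem \ref{th-gc-Lower}: when some component of $G-X$ is a clique $K_{g+1}$ (which cannot be split into two parts of minimum degree at least $g$, hence lies in $\mathcal{C}(X)$ with $|C_r|=g+1$), the upper bound collapses to $\kappa^g(G)+g+1$ and matches the lower bound, so equality holds; your tree example for $g=0$ is a correct concrete instance. Both arguments legitimately show equality is attained, but they certify different things: in your witnesses the upper bound coincides with the lower bound of Theorem \ref{th-gc-Lower}, so they cannot distinguish the two bounds, whereas the paper's $H_n$ attains the upper bound with $\min\{|C_r|,|A_1|\}=\lfloor (n-r)/2\rfloor$ far exceeding $g+1$, showing that the expression $\kappa^g(G)+\min\{|C_r|,|A_1|\}$ cannot in general be improved to anything close to $\kappa^g(G)+g+1$. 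If sharpness is meant in this stronger sense (the bounding expression itself is best possible), your argument does not establish it and a construction with a large unsplittable smallest component, like the paper's, is needed.
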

\begin{proof}
From the definition of $\kappa^g(G)$, there exists a subset $X\subseteq V(G)$ with $|X|=\kappa^g(G)$ such that $G\setminus X$ is not connected  and the minimum degree of each connected component of $G-X$ is at least $g$.
From the definition of $c^g(G)$, we have $c^g(G)\leq |X|+\min\{|C_r|,|A_1|\}=\kappa^g(G)+\min\{|C_r|,|A_1|\}$.
\end{proof}

To show the sharpness of Theorem \ref{th-gc-Upper}, we consider the following example.
\begin{example}
Let $H_n$ be a graph of order $n$ obtained from two $g$-regular connected graphs $D_{\lfloor (n-r)/2\rfloor}$ and $D_{\lceil (n-r)/2\rceil}$ by adding new vertices $u_1,u_2,...,u_{r}$ and new edges in $\{u_iu\,|\,1\leq i\leq r\}\cup \{u_iv\,|\,1\leq i\leq r\}$, where $r=\kappa^g(H_n)$, $u\in V(D_{\lfloor (n-r)/2\rfloor})$, $v\in V(D_{\lceil (n-r)/2\rceil})$, and $3\leq g\leq \lceil (n-r)/2\rceil-1$. For any $g$-good neighbor cut $X\subseteq V(H_n)$, we have $\{u_i\,|\,1\leq i\leq r\}\subseteq X$, since $g\geq 3$ and $deg_{H_n}(u_i)=2$. Since $D_{\lfloor (n-r)/2\rfloor}$ and $D_{\lceil (n-r)/2\rceil}$ are two $g$-regular connected graphs, it follows that $X\cap V(D_{\lfloor (n-r)/2\rfloor})=\emptyset$ and $X\cap V(D_{\lceil (n-r)/2\rceil})=\emptyset$. Then $X=\{u_i\,|\,1\leq i\leq r\}$, and hence
$c^g(H_n)=r+\lfloor (n-r)/2\rfloor=\kappa^g(H_n)+|D_{\lfloor (n-r)/2\rfloor}|$, and hence the bound is sharp.
\end{example}

The following corollary is immediate by Theorems \ref{th-gc-Upper} and \ref{th-gc-Lower}.
\begin{corollary}\label{cor-formula-c}
Let $G$ be a connected graph, and let $g$ be an integer with $g\geq 0$. If $\kappa^g(G)$ exists and there exists a minimum
$g$-good-neighbor
cut $X$ and a vertex subset $C_r$ or $A_1$ with $|C_r|=|A_1|=g+1$, then
$$
c^g(G)=\kappa^g(G)+g+1.
$$
\end{corollary}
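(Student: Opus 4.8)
The plan is to obtain the claimed equality by squeezing $c^g(G)$ between the lower and upper bounds already established, since the hypotheses have been arranged precisely so that the two bounds coincide. First I would invoke Theorem~\ref{th-gc-Lower}: because $\kappa^g(G)$ is assumed to exist, it gives immediately
$$
c^g(G) \geq \kappa^g(G) + g + 1,
$$
which is half of what we want and requires no further argument.

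For the reverse inequality I would turn to the minimum $g$-good-neighbor cut $X$ supplied by the hypothesis, noting that minimality means $|X| = \kappa^g(G)$. Theorem~\ref{th-gc-Upper} applied to this $X$ yields $c^g(G) \leq \kappa^g(G) + \min\{|C_r|, |A_1|\}$, where $C_r$ and $A_1$ are the minimal members of $\mathcal{C}(X)$ and $\mathcal{A}(X)$ respectively. The key observation is that the hypothesis provides a subset ($C_r$ or $A_1$) of cardinality exactly $g+1$; since every $g$-good-neighbor component and every admissible subset $A_i$ induces a subgraph of minimum degree at least $g$ and hence contains at least $g+1$ vertices, the quantity $\min\{|C_r|,|A_1|\}$ cannot drop below $g+1$ and is therefore exactly $g+1$. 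Substituting gives
$$
c^g(G) \leq \kappa^g(G) + g + 1.
$$

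Combining the two displayed inequalities produces the desired equality. There is essentially no serious obstacle here: the only point deserving care is to confirm that the cut $X$ from the hypothesis is genuinely the object fed into Theorem~\ref{th-gc-Upper} — that is, a minimum cut with $|X| = \kappa^g(G)$ together with its associated minimal $C_r$ and $A_1$ — and that the presence of a size-$(g+1)$ subset really forces $\min\{|C_r|,|A_1|\} = g+1$ rather than something smaller, which rests on the elementary fact that a graph of minimum degree at least $g$ has at least $g+1$ vertices. Once these bookkeeping checks are in place, the corollary follows at once from Theorems~\ref{th-gc-Lower} and \ref{th-gc-Upper}.
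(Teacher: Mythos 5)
Your proof is correct and follows exactly the route the paper intends: the paper states this corollary as ``immediate by Theorems \ref{th-gc-Upper} and \ref{th-gc-Lower}'', i.e.\ squeezing $c^g(G)$ between the lower bound $\kappa^g(G)+g+1$ and the upper bound $\kappa^g(G)+\min\{|C_r|,|A_1|\}$ evaluated under the size-$(g+1)$ hypothesis. Your additional check that minimum degree at least $g$ forces every admissible $C_r$ or $A_1$ to have at least $g+1$ vertices is exactly the right bookkeeping to make the claim airtight.
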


\section{Results for $g$-good-neighbor diagnosability}\label{Section4}

We now give an upper bound of $t^{g}(G)$ in terms of gc numbers and order.
\begin{theorem}\label{th-Upper-gc}
Let $G$ be a connected graph with $n$ vertices. If $\kappa^g(G)$ exists, then
$$
t^{g}(G)\leq \min\{c^g(G)-1,n-g-2\}.
$$
Moreover, the bound is sharp.
\end{theorem}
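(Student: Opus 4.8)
The plan is to prove the two inequalities $t^g(G)\le c^g(G)-1$ and $t^g(G)\le n-g-2$ separately, each by exhibiting an \emph{indistinguishable} $g$-good-neighbor conditional pair and invoking Theorem \ref{th-condition} together with the Dahbura--Masson criterion (Theorem \ref{th-DM84}). Recall from the proof of Theorem \ref{th-condition} that, to conclude $t^g(G)\le s$, it suffices to produce two indistinguishable $g$-good-neighbor conditional faulty sets $F_1,F_2$ with $|F_1|=s+1$ and $|F_2|\le s+1$.

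For $t^g(G)\le c^g(G)-1$ I would unwind Definition \ref{def3-1}: choose a cutset $X$ with $|X|\ge\kappa^g(G)$ and $\delta(G-X)\ge g$ realizing $c^g(G)=|X|+\min\{|A_1|,|C_r|\}$. If the minimum is attained by a non-partitionable component $C_r\in\mathcal{C}(X)$, set $F_1=X\cup V(C_r)$ and $F_2=X$; if it is attained by some $A_1\in\mathcal{A}(X)$, writing the corresponding component as $C_1=A_1\cup B_1$ with $\delta(G[A_1])\ge g$, $\delta(G[B_1])\ge g$ and $|A_1|\ge|B_1|$, set $F_1=X\cup A_1$ and $F_2=X\cup B_1$. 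In either case $F_1\triangle F_2$ lies inside a single component of $G-X$ (namely $V(C_r)$, respectively $C_1$), while $\overline{F_1\cup F_2}$ is a union of the remaining components; as $X$ separates the components, no edge joins the two, so Theorem \ref{th-DM84} yields indistinguishability. The hypotheses $\delta(G-X)\ge g$, $\delta(G[A_1])\ge g$, $\delta(G[B_1])\ge g$ are exactly what is needed to check each of $F_1,F_2$ is a $g$-good-neighbor conditional faulty set, and by construction $|F_1|=c^g(G)$ while $|F_2|\le c^g(G)$. Theorem \ref{th-condition} then gives $t^g(G)\le c^g(G)-1$.

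For $t^g(G)\le n-g-2$ I would prove the structural inequality $c^g(G)\le n-g-1$ and combine it with the previous bound. Taking a minimum cutset $X$ (so $|X|=\kappa^g(G)$), the graph $G-X$ is disconnected with $\delta(G-X)\ge g$, hence has at least two components, each on at least $g+1$ vertices. Thus the components other than the smallest contribute at least $g+1$ vertices, so $|X|+|C_r|\le n-(g+1)$; since $c^g(G)\le|X|+\min\{|A_1|,|C_r|\}\le|X|+|C_r|$, we get $c^g(G)\le n-g-1$, whence $t^g(G)\le c^g(G)-1\le n-g-2$. In fact this shows the minimum in the statement is always realized by $c^g(G)-1$.

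Finally, for sharpness I would take the graph $G$ built from two disjoint cliques $A\cong B\cong K_{g+1}$ together with a single apex vertex $x$ adjacent to all $2(g+1)$ clique vertices. Then $\{x\}$ is a minimum $g$-good-neighbor cut, so $\kappa^g(G)=1$, $c^g(G)=g+2$ and $n=2g+3$, giving $c^g(G)-1=n-g-2=g+1$ so that both terms of the minimum are attained simultaneously. The upper bound forces $t^g(G)\le g+1$ via the pair $F_1=\{x\}\cup A$, $F_2=\{x\}$, and the remaining task is the matching lower bound $t^g(G)\ge g+1$: one must verify that every indistinguishable $g$-good-neighbor conditional pair $(F_1,F_2)$ has $|F_1|\ge g+2$, which by Theorem \ref{th-DM84} reduces to an analysis of the $g$-good-neighbor cuts of $G$ (every such cut must contain $x$) together with the case $F_1\cup F_2=V(G)$. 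I expect this sharpness direction to be the main obstacle, since, unlike the upper bound, it requires ruling out \emph{all} small indistinguishable pairs rather than constructing a single favorable one.
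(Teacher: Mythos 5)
Your proof of the inequality itself is correct. The first bound, $t^{g}(G)\le c^g(G)-1$, is established exactly as in the paper: the pair $(X\cup C_r,\,X)$, resp.\ $(X\cup A_1,\,X\cup B_1)$, is indistinguishable by Theorem \ref{th-DM84} because $X$ separates $F_1\triangle F_2$ from $\overline{F_1\cup F_2}$, and Theorem \ref{th-condition} converts this witness into the bound. For the second bound you genuinely diverge: the paper proves $t^{g}(G)\le n-g-2$ directly by contradiction (an indistinguishable pair with $|F_1|\ge n-g$ leaves at most $g$ vertices outside $F_1$, incompatible with $F_1$ being a $g$-good-neighbor faulty set), whereas you prove the structural inequality $c^g(G)\le n-g-1$ and absorb the second term into the first. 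Your argument is sound, with the small caveat that in the paper's notation $C_r$ is the smallest \emph{non-partitionable} component, so you should say that $\min\{|A_1|,|C_r|\}$ is at most the size of the smallest component of $G-X$ in every case (if that component is non-partitionable it equals $|C_r|$; if it is partitionable, $|A_1|$ is smaller still), after which two components of size at least $g+1$ give the claim. Your route also buys something the paper's proof does not make explicit: whenever $\kappa^g(G)$ exists, the minimum in the statement is always attained by $c^g(G)-1$.

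The gap is the sharpness claim, which is part of the statement and which you explicitly leave unfinished. Your example (an apex $x$ joined to two disjoint copies of $K_{g+1}$, so $n=2g+3$, $\kappa^g(G)=1$, $c^g(G)=g+2$ by Theorem \ref{th-gc-Lower}, hence $c^g(G)-1=n-g-2=g+1$) is a valid choice, but you still owe the lower bound $t^{g}(G)\ge g+1$. Two remarks. First, the cheap way to finish is to cite Theorem \ref{th-Upper-gc-condition}: its proof uses only the inequality of the present theorem, not its sharpness, so there is no circularity, and your example satisfies $c^g(G)=g+2=\lceil n/2\rceil$, giving $t^{g}(G)=c^g(G)-1=g+1$ at once; this is in effect how the paper's own example (cliques $K_{g+1}$ and $K_{n-g-1}$ joined by an edge) is justified. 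Second, the hand analysis you sketch is missing one ingredient: identifying $\{x\}$ as the unique $g$-good-neighbor cut and handling $F_1\cup F_2=V(G)$ is not quite enough, because a priori an indistinguishable pair could have $F_1-F_2$ and $F_2-F_1$ \emph{splitting} a component of $G-x$; you must also use the fact that $K_{g+1}$ admits no partition into two parts each of minimum degree at least $g$ (each part would need $g+1$ vertices) to force the pair to be $(\{x\}\cup K_{g+1},\{x\})$ up to symmetry, whence $\max\{|F_1|,|F_2|\}\ge g+2$. Either way, this half must actually be written down for the theorem as stated to be proved.
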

\begin{proof}
From the definition of $c^g(G)$, there exists a subset $X\subseteq V(G)$ with $|X|\geq \kappa^g(G)$, and a connected component $C_r\in \mathcal{C}$ or a vertex subset $A_1\in \mathcal{A}$ of $G\setminus X$ such that $|C_r\cup X|=c^g(G)$ or $|A_1\cup X|=c^g(G)$, and the minimum degree of each connected component of $G-X$ is at least $g$. Let $F_1=X\cup C_r$ and $F_2=X$ if $|C_r\cup X|=c^g(G)$, and let $F_1=X\cup A_1$ and $F_2=X\cup B_1$ if $|A_1\cup X|=c^g(G)$.
Clearly,
$|F_1|=|X|+|C_r|$ and $|F_2|=|X|$ if $|C_r\cup X|=c^g(G)$, and let $|F_1|=|X|+|A_1|$ and $|F_2|=|X|+|B_1|$ if $|A_1\cup X|=c^g(G)$.
Since there is no edges between each $C_i \ (1\leq i\leq r-1)$
and $F_1 \triangle F_2=C_r$ or there is no edges between each $C_i \ (2\leq i\leq r)$
and $F_1 \triangle F_2=C_1$, it follows that there do not exist two vertices $u,u'$ in $G$ with $u\in C_1=F_1 \triangle F_2$ and $u' \in \overline{F_1 \cup F_2}$ such that $uu'\in E(G)$. From Theorem \ref{th-DM84},
$(F_1,F_2)$ is an indistinguishable pair in $G$ under the PMC model. This means that $t^{g}(G)\leq |F_1|-1=c^g(G)-1.$

Suppose that $
t^{g}(G)\geq n-g-1.$ From Theorem \ref{th-condition}, there exist two indistinguishable $g$-good-neighbor conditional faulty sets $F_1,F_2$ such that $|F_1|=t^{g}(G)+1$ and $|F_2|\leq t^{g}(G)+1$. Then $|F_1|=t^{g}(G)+1\geq n-g$, and hence $|V(G)|-|F_1|\leq g$, a contradiction. So, we have $
t^{g}(G)\leq n-g-2.$
\end{proof}

\begin{example}
Let $F_n$ be a graph of order $n$ obtained from two cliques $K_{g+1}$ and $K_{n-g-1}$ by adding a new edge between them, where $n\geq 2g+3$.
Then $c^g(F_n)=g+2$ and $t^{g}(F_n)=g+1$, and hence the bound is sharp.
\end{example}

\begin{theorem}\label{th-Upper-gc-condition}
Let $G$ be a connected graph with $n$ vertices such that $\kappa^g(G)$ exists. If $c^g(G)\leq \lceil n/2\rceil$, then
\begin{equation}\label{eq-t-c}
t^{g}(G)=c^g(G)-1.
\end{equation}
\end{theorem}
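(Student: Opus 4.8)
The plan is to establish a matching lower bound for the upper bound already in hand. Since $\kappa^g(G)$ exists, Theorem \ref{th-condition-D} guarantees that $t^g(G)$ exists, and Theorem \ref{th-Upper-gc} gives $t^g(G)\le c^g(G)-1$ (the term $n-g-2$ is not binding). So it remains to show $t^g(G)\ge c^g(G)-1$. By Theorem \ref{th-condition}, if $t^g(G)=s$ then there is an indistinguishable $g$-good-neighbor conditional pair $(F_1,F_2)$ with $|F_1|=s+1$ and $|F_2|\le s+1$, so $\max\{|F_1|,|F_2|\}=s+1$. Hence it suffices to prove the uniform bound that \emph{every} indistinguishable $g$-good-neighbor conditional pair $(F_1,F_2)$ satisfies $\max\{|F_1|,|F_2|\}\ge c^g(G)$; this forces $s+1\ge c^g(G)$, i.e. $t^g(G)=s\ge c^g(G)-1$, and together with Theorem \ref{th-Upper-gc} gives equality.

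To prove the uniform bound, fix such a pair with, without loss of generality, $|F_1|\ge |F_2|$ (equivalently $|F_1-F_2|\ge |F_2-F_1|$), and split into two cases. If $F_1\cup F_2=V(G)$, then $|F_1|+|F_2|\ge |F_1\cup F_2|=n$, so $\max\{|F_1|,|F_2|\}\ge\lceil n/2\rceil\ge c^g(G)$ by the hypothesis $c^g(G)\le\lceil n/2\rceil$; this is the only place the size assumption is used. If $F_1\cup F_2\ne V(G)$, then $\overline{F_1\cup F_2}\ne\emptyset$ and I set $X=F_1\cap F_2$. As in the proof of Theorem \ref{th-condition-D}, indistinguishability together with Theorem \ref{th-DM84} shows there is no edge between $F_1\triangle F_2$ and $\overline{F_1\cup F_2}$; since $G$ is connected and both sets are nonempty (as $F_1\ne F_2$), $G-X$ is disconnected. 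Moreover each $v\in V(G)-X$ lies in $V(G)-F_1$ or $V(G)-F_2$, and since $X\subseteq F_1\cap F_2$ we have $\deg_{G-X}(v)\ge\deg_{G-F_i}(v)\ge g$ for the corresponding $i$; thus $\delta(G-X)\ge g$ and $X$ is a $g$-good-neighbor cut, so $X\in\mathcal{X}$.

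The heart of the argument is to exhibit, for this $X$, a piece of $G-X$ of the type counted in the definition of $c^g(G)$ having size at most $\max\{|F_1-F_2|,|F_2-F_1|\}=|F_1-F_2|$. Because no edge joins $F_1\triangle F_2$ to $\overline{F_1\cup F_2}$, every connected component $C$ of $G-X$ lies entirely in $F_1\triangle F_2$ or entirely in $\overline{F_1\cup F_2}$. Choose a component $C\subseteq F_1\triangle F_2$ (one exists since $F_1\ne F_2$) and set $A=C\cap(F_1-F_2)$, $B=C\cap(F_2-F_1)$. Since the neighbors of a vertex of $A$ lying outside $X$ remain inside its component $C$ and avoid $\overline{F_1\cup F_2}$, a degree computation gives $\delta(G[A])\ge g$ when $A\ne\emptyset$, and symmetrically $\delta(G[B])\ge g$ when $B\ne\emptyset$. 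If one of $A,B$ is empty, then $C$ lies wholly inside $F_1-F_2$ or $F_2-F_1$: either $C$ is non-partitionable, so $C\in\mathcal{C}(X)$ with $|C_r|\le|C|\le|F_1-F_2|$, or $C$ is partitionable, so its larger half is in $\mathcal{A}(X)$ with size at most $|C|\le|F_1-F_2|$. If both $A,B$ are nonempty, then $(A,B)$ is a valid partition of $C$, hence the minimum-difference partition defining $\mathcal{A}(X)$ has larger half $A_C$ with $|A_C|\le\max\{|A|,|B|\}\le|F_1-F_2|$, and $A_C\in\mathcal{A}(X)$. In every subcase $G-X$ has a piece $P\in\mathcal{A}(X)\cup\mathcal{C}(X)$ with $|P|\le|F_1-F_2|$, so $\min\{|A_1|,|C_r|\}\le|P|$ and therefore $c^g(G)\le|X|+\min\{|A_1|,|C_r|\}\le|X|+|F_1-F_2|=|F_1|=\max\{|F_1|,|F_2|\}$.

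The main obstacle is precisely this component analysis: one must check carefully that splitting a cross-component $C$ along $F_1-F_2$ and $F_2-F_1$ yields subgraphs of minimum degree at least $g$, so that $C$ genuinely contributes to $\mathcal{A}(X)$, and that the minimum-difference partition used in the definition has its larger half bounded by the side of $C$ sitting in $F_1-F_2$. Everything else is bookkeeping, and once this structural fact is secured the uniform lower bound follows, giving $t^g(G)=c^g(G)-1$.
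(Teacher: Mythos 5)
Your proposal is correct and follows essentially the same route as the paper: the upper bound is taken from Theorem \ref{th-Upper-gc} (where, as in the paper, Proposition \ref{pro-g} is what guarantees the term $n-g-2$ is not binding), and your uniform bound ``every indistinguishable pair has $\max\{|F_1|,|F_2|\}\geq c^g(G)$'' is precisely the contrapositive of the paper's claim that all pairs of $g$-good-neighbor conditional faulty sets of size at most $c^g(G)-1$ are distinguishable, established via Theorem \ref{th-DM84} and the observation that $F_1\cap F_2$ is a $g$-good-neighbor cut. Your component-by-component analysis (splitting each component $C$ of $G-(F_1\cap F_2)$ inside $F_1\triangle F_2$ into $A=C\cap(F_1-F_2)$ and $B=C\cap(F_2-F_1)$ and comparing with the minimum-difference partition) is a careful expansion of the step the paper compresses into the single sentence that $F_1\triangle F_2$ is a union of connected components of $G-(F_1\cap F_2)$ and hence $|F_1|\geq c^g(G)$.
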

\begin{proof}
Since $\kappa^g(G)$ exists, it follows from Theorem \ref{th-condition-D} that $t^{g}(G)$ exists. Since $\kappa^g(G)$ exists, it follows from Proposition \ref{pro-g} that $g\leq \lfloor
(n-3)/2\rfloor$. From Theorem \ref{th-Upper-gc}, we have $t^{g}(G)\leq \min\{c^g(G)-1,n-g-2\}=c^g(G)-1$.

To show $t^{g}(G)\geq c^g(G)-1$, for any $F_i\subseteq V(G)$ with $|F_i|\leq c^g(G)-1\leq \lceil n/2\rceil-1$, and $F_i$ is a $g$-good-neighbor conditional faulty set, where $i=1,2$, we suppose that $F_1\cap F_2=\emptyset$. Since $|F_1|+|F_2|\leq n-1$, it follows that there exists a vertex $u\in \overline{F_1\cup F_2}$ and there exists a vertex $u'\in F_1\cup F_2$ such that $uu'\in E(G)$. From Theorem \ref{th-DM84}, $(F_1,F_2)$ is a distinguishable $g$-good-neighbor pair in $G$ under the PMC model.

Suppose that $F_1\cap F_2\neq \emptyset$. Then we have the following claim.
\begin{claim}\label{claim}
$(F_1,F_2)$ is a distinguishable $g$-good-neighbor pair in $G$ under the PMC model.
\end{claim}
\begin{proof}
Suppose, to the contrary, that $(F_1,F_2)$ is an indistinguishable $g$-good-neighbor pair in $G$ under the PMC model. Without loss of generality, let $|F_1|\geq |F_2|$. Since $\overline{F_1 \cup F_2}\neq \emptyset$, it follows from Theorem \ref{th-DM84} that there is no edge from $\overline{F_1 \cup F_2}$ to $F_1 \triangle F_2$. Then $F_1\cap F_2$ is a $g$-good-neighbor cutset of $G$. Note that the minimum degree of the subgraph induced by the vertices in $F_1-F_2$ (resp. $F_2-F_1$) is at least $g$.
Therefore, $F_1\triangle F_2$ is a union of connected components of $G-(F_1\cap F_2)$, and hence $|F_1|\geq c^g(G)$, a contradiction.
\end{proof}

From Claim \ref{claim}, $(F_1,F_2)$ is a distinguishable $g$-good-neighbor pair in $G$ under the PMC model, and so $t^{g}(G)\geq c^g(G)-1.$ The result follows.
\end{proof}

\section{Results for $g$-good-neighbor diagnosability of product networks}\label{Section5}

The following result is immediate by Theorem \ref{th-Upper-gc-condition}.
\begin{corollary}\label{th-cor}
Let $G,H$ be two connected graphs
with $n,m$ vertices, respectively.
 such that $\kappa^g(G\Box H)$ exists. If $c^g(G\Box H)\leq \lceil nm/2\rceil$, then
$$
t^{g}(G\square H)=c^g(G\square H)-1.
$$
\end{corollary}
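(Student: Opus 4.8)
The plan is to apply Theorem~\ref{th-Upper-gc-condition} directly to the graph $G\Box H$, treating the product as the single ambient graph playing the role of ``$G$'' in the statement of that theorem. The only preliminary observation I would make is that $G\Box H$ is itself a connected graph: since $G$ and $H$ are both connected, any two vertices $(u,v)$ and $(u',v')$ of $G\Box H$ can be joined by first traversing a path inside the copy $H(u)$ from $(u,v)$ to $(u,v')$ and then a path inside the copy $G(v')$ from $(u,v')$ to $(u',v')$. Hence $G\Box H$ is connected, and its order is $|V(G\Box H)| = nm$.

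With connectivity in hand, I would simply verify that every hypothesis of Theorem~\ref{th-Upper-gc-condition} holds for the graph $G\Box H$ on $nm$ vertices. The existence of $\kappa^g(G\Box H)$ is assumed outright in the statement, and the hypothesis $c^g(G\Box H)\leq \lceil nm/2\rceil$ is precisely the bound ``$c^g\leq \lceil (\text{order})/2\rceil$'' required by the theorem, with the order being $nm$. Theorem~\ref{th-Upper-gc-condition} then yields
$$
t^{g}(G\Box H)=c^g(G\Box H)-1,
$$
which is exactly the desired conclusion.

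Since the corollary is an immediate specialization, there is essentially no obstacle: no new combinatorial argument is needed, and the entire content is carried by Theorem~\ref{th-Upper-gc-condition}. The only points worth stating explicitly are the connectivity of the Cartesian product (a standard fact) and the bookkeeping that the relevant vertex count is $nm$, so that the threshold $\lceil nm/2\rceil$ is the correct instance of $\lceil n/2\rceil$ appearing in the theorem.
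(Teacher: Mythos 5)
Your proposal is correct and is essentially identical to the paper's own proof, which simply invokes Theorem~\ref{th-Upper-gc-condition} applied to the graph $G\Box H$ with order $nm$. The extra details you supply (connectedness of the Cartesian product and the vertex count $nm$) are routine verifications that the paper leaves implicit.
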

\begin{proof}
From Theorem \ref{th-Upper-gc-condition}, we have
$t^{g}(G\square H)= c^g(G\square H)-1$.
\end{proof}

Under the PMC model, we can derive an upper bound of diagnosability.
\begin{theorem}\label{th-Upper-D}
Let $p,q,g$ be three integers with $p+q=g$.
Let $G,H$ be two connected graphs
with at least $n\geq p+1,m\geq q+1$ vertices, respectively, such that $0\leq p\leq g\leq  \delta(G)+q$ and $0\leq q\leq g\leq \delta(H)+p$.
Suppose that $G,H$ are not complete graphs such that $\kappa^p(G)$ and $\kappa^q(H)$ exist, and $c^p(G)\leq \lceil n/2\rceil$ and $c^q(H)\leq \lceil m/2\rceil$.

$(i)$ If $G\in \mathcal{G}_1$ and $H\in \mathcal{G}_1$, then
$$
t^{g}(G\square H)\leq (t^{p}(G)+1)(t^{q}(H)+1)-1.
$$
Moreover, the bound is sharp.

$(ii)$ If $G\in \mathcal{G}_2$ and $H\in \mathcal{G}_1$, then
$$
\begin{aligned}
t^{g}(G\square H)
&\leq 2(t^{p}(G)+1)(t^{q}(H)+1)-(t^q(H)+1)\kappa^p(G)-(p+1)(q+1)-1.
\end{aligned}
$$
Moreover, the bound is sharp.

$(iii)$ If $G\in \mathcal{G}_1$ and $H\in \mathcal{G}_2$, then
$$
\begin{aligned}
t^{g}(G\square H)
&\leq 2(t^{p}(G)+1)(t^{q}(H)+1)-(t^p(G)+1)\kappa^q(H)-(p+1)(q+1)-1.
\end{aligned}
$$
Moreover, the bound is sharp.

$(iv)$ If $G\in \mathcal{G}_2$ and $H\in \mathcal{G}_2$, then
$$
\begin{aligned}
t^{g}(G\square H)
&\leq (2t^{p}(G)-\kappa^p(G)+2)(2t^{p}(H)-\kappa^q(H)+2)-2(p+1)(q+1)-1.
\end{aligned}
$$
\end{theorem}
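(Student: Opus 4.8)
The overall plan is to route everything through the gc number. I would first observe that the cut I am about to build shows $\kappa^g(G\Box H)$ exists, so Theorem~\ref{th-Upper-gc} yields $t^g(G\Box H)\le c^g(G\Box H)-1$, reducing all four parts to upper bounds on $c^g(G\Box H)$. The hypotheses $c^p(G)\le\lceil n/2\rceil$ and $c^q(H)\le\lceil m/2\rceil$ then let me apply Theorem~\ref{th-Upper-gc-condition} to $G$ and $H$, giving $c^p(G)=t^p(G)+1$ and $c^q(H)=t^q(H)+1$; thus any bound on $c^g(G\Box H)$ phrased through $c^p(G),c^q(H),\kappa^p(G),\kappa^q(H)$ becomes exactly one of the stated inequalities.

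The single construction underlying all four cases is a ``box-frame'' cut. Fix a minimum $p$-good-neighbor cut $X_G$ of $G$ realizing $c^p(G)$, together with the relevant component $C^G$ of $G-X_G$, and symmetrically $X_H,C^H$ for $H$; set $Y_G=X_G\cup C^G$, $Y_H=X_H\cup C^H$, and
$$X^{*}=(Y_G\Box Y_H)\setminus(C^G\Box C^H)=(X_G\Box Y_H)\cup(C^G\Box X_H).$$
I would check three things. First, $X^{*}$ isolates $C^G\Box C^H$: any neighbor of a vertex of $C^G\Box C^H$ lying outside it must sit in $(X_G\Box C^H)\cup(C^G\Box X_H)\subseteq X^{*}$, because $C^G$ and $C^H$ are components of $G-X_G$ and $H-X_H$. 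Second, the subgraph induced on $C^G\Box C^H$ is $G[C^G]\Box H[C^H]$, so each vertex $(u,v)$ there has degree $\deg_{G[C^G]}(u)+\deg_{H[C^H]}(v)\ge p+q=g$. Third --- the delicate point --- $X^{*}$ must itself be a $g$-good-neighbor faulty set: for a surviving vertex $(u,v)\notin Y_G\Box Y_H$ either $u\notin Y_G$ or $v\notin Y_H$, and in the former case all $\ge\delta(H)\ge q$ of its $H$-neighbors survive alongside $\ge p$ of its $G$-neighbors inside $u$'s own component, giving degree $\ge g$; the latter case is symmetric. This is precisely where $g\le\delta(G)+q$ and $g\le\delta(H)+p$ (i.e.\ $\delta(G)\ge p$, $\delta(H)\ge q$) enter. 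Since $X_G$ is a cut, $G-X_G$ has a component other than $C^G$, so $(G\Box H)-X^{*}$ is genuinely disconnected and $X^{*}$ is a $g$-good-neighbor cut.

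Granting $|X^{*}|+|C^G\Box C^H|=|Y_G|\,|Y_H|=(|C^G|+\kappa^p(G))(|C^H|+\kappa^q(H))$, the four parts differ only in which piece inside $C^G\Box C^H$ I read off, using $c^g(G\Box H)\le|X^{*}|+\min\{|A_1(X^{*})|,|C_r(X^{*})|\}$ and the fact that a most-balanced split has the smallest possible larger half. In case $(i)$ I take $C^G,C^H$ to be the non-splitting components realizing $c^p(G)=|C^G|+\kappa^p(G)$ and $c^q(H)=|C^H|+\kappa^q(H)$, so the whole core gives $c^g(G\Box H)\le|Y_G||Y_H|=c^p(G)c^q(H)$. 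In case $(ii)$, with $G\in\mathcal G_2$ splitting as $C^G=A_G\cup B_G$ ($|A_G|\ge|B_G|$, both inducing minimum degree $\ge p$) and $H\in\mathcal G_1$, the core splits as $\{A_G\Box C^H,\,B_G\Box C^H\}$ into two sets of minimum degree $\ge g$, so $|A_1(X^{*})|\le|A_G|\,|C^H|$ and $c^g(G\Box H)\le|X^{*}|+|A_G||C^H|$; case $(iii)$ is the mirror image. In case $(iv)$, with both factors splitting, I use the balanced diagonal split $\{(A_G\Box A_H)\cup(B_G\Box B_H),\,(A_G\Box B_H)\cup(B_G\Box A_H)\}$, again of minimum degree $\ge g$, giving $c^g(G\Box H)\le|X^{*}|+|A_G||A_H|+|B_G||B_H|$.

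The last step is to substitute $|X^{*}|=(|C^G|+\kappa^p(G))(|C^H|+\kappa^q(H))-|C^G||C^H|$ into each bound and simplify; writing $a=|A_G|$, $b=|B_G|$, $a'=|A_H|$, $b'=|B_H|$, each inequality collapses after cancellation to one of the form $(p+1)(q+1)\le(\text{a nonnegative combination of }a,b,a',b',\kappa^p(G),\kappa^q(H))$, which holds because every half of a $p$- or $q$-good split has at least $p+1$, resp.\ $q+1$, vertices while $a\ge b$ and $a'\ge b'$. I expect the two genuine obstacles to be (a) discharging $\delta((G\Box H)-X^{*})\ge g$ uniformly over the two sub-cases above, and (b) confirming in case $(iv)$ that the diagonal split is a valid $g$-good bipartition and beats the axis-aligned one (its larger half is smaller by $(a-b)b'\ge0$), so that it indeed controls $|A_1(X^{*})|$. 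Sharpness in $(i)$--$(iii)$ will come from instantiating $G,H$ so that the final inequality is tight, e.g.\ $|A_G|=|B_G|=p+1$ and $|C^H|=q+1$, which I would leave to the concrete examples.
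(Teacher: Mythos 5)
Your proposal is correct, and it is built on exactly the same combinatorial object as the paper's proof: your ``box-frame'' cut $X^{*}=(Y_G\Box Y_H)\setminus(C^G\Box C^H)$ is precisely the paper's second faulty set (in case $(i)$ the paper takes $X_2=(C_r^p\Box X^q)\cup (C_r^q\Box X^p)\cup (X^q\Box X^p)=X^{*}$ and $X_1=X^{*}\cup(C_r^p\Box C_r^q)$), and your axis split $\{A_G\Box C^H, B_G\Box C^H\}$ and diagonal split in case $(iv)$ reproduce the paper's pairs $(X_1,X_2)$ in the remaining cases. What differs is the logical route. The paper argues directly under the PMC model: in each of the four cases it verifies via Theorem \ref{th-DM84} that $(X_1,X_2)$ is indistinguishable, checks separately that $X_1$ and $X_2$ are $g$-good-neighbor faulty sets, and concludes $t^{g}(G\Box H)\le |X_1|-1$. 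You instead verify once that $X^{*}$ is a $g$-good-neighbor cut of the product, bound $c^g(G\Box H)$ by $|X^{*}|$ plus the size of the isolated component $C^G\Box C^H$ (or the larger half of an exhibited split of it, which legitimately upper-bounds the most balanced split's larger half), and invoke Theorem \ref{th-Upper-gc} to get $t^g\le c^g-1$. This buys modularity: one minimum-degree verification replaces four faulty-set/indistinguishability verifications, and you obtain explicit upper bounds on $c^g(G\Box H)$ as a byproduct, which is exactly the input Corollary \ref{th-cor} needs for exact values (indeed the paper's own sharpness examples, Propositions \ref{pro-DP} and \ref{pro-DP-2}, work this way). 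Your final algebra, though only sketched, does close: e.g.\ in case $(ii)$ your bound reads $c^g(G\Box H)\le (a+b+x)(c+y)-bc$ with $a=|A_G|$, $b=|B_G|$, $x=|X_G|$, $c=|C^H|$, $y=|X_H|$, and this is at most $2c^p(G)c^q(H)-c^q(H)\kappa^p(G)-(p+1)(q+1)$ using $b\le a$, $x\ge \kappa^p(G)$, and $bc\ge(p+1)(q+1)$, which are the facts you cite. Two minor caveats: the parenthetical $|Y_G|\,|Y_H|=(|C^G|+\kappa^p(G))(|C^H|+\kappa^q(H))$ tacitly assumes the cut realizing $c^p(G)$ has size exactly $\kappa^p(G)$, whereas the invariant your argument actually uses (and also states) is $|Y_G|=c^p(G)$ with $|X_G|\ge\kappa^p(G)$; and, like the paper, you defer sharpness to concrete examples rather than establishing it inside the proof.
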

\begin{proof}
Let $c^p(G)=c^p$ and $c^q(H)=c^q$. From Theorem \ref{th-Upper-gc-condition}, we have $t^p(G)= c^p-1$ and $t^q(H)= c^q-1$.
Since $\kappa^p(G)$ exists, it follows from the definition of $c^p(G)$ that there exists a cutset $X^p\subseteq V(G)$ with $|X^p|\geq \kappa^p(G)$ and a vertex subset $C_r^p$ or $A_1^p$ such that $|C_r^p|+|X^p|=c^p(G)$ or $|A_1^p|+|X^p|=c^p(G)$ and the minimum degree of the subgraph induced by the vertices of $C_r^p$ or $A_1^p$ in $G-X^p$ is at least $p$.
Since $\kappa^q(H)$ exists, it follows from the definition of $c^q(H)$ that there exists a cutset $X^q\subseteq V(H)$ with $|X^q|\geq \kappa^q(H)$ and a vertex subset $C_r^q$ or $A_1^q$ such that $|C_r^q|+|X^q|=c^q(H)$ or $|A_1^q|+|X^q|=c^q(H)$ and the minimum degree of the subgraph induced by the vertices of $C_r^q$ or $A_1^q$ in $H-X^q$ is at least $q$.
\begin{case}
$G\in \mathcal{G}_1$ and $H\in \mathcal{G}_1$.
\end{case}

In this case, we have $|C_r^p|+|X^p|=c^p(G)$ and  $|C_r^q|+|X^q|=c^q(H)$.
Let
$$
X_1= (C_r^p\Box C_r^q)\cup (C_r^p\Box X^q)\cup (X^p\Box C_r^q)\cup (X^p\Box X^q)
$$
and
$$
X_2=(C_r^p\Box X^q)\cup (C_r^q\Box X^p)\cup (X^q\Box X^p);
$$
see Fig. \ref{Figure4}. From Observation \ref{obs-1}, we have $X_1=((C_r^p\cup X^p) \Box C_r^q)\cup ((C_r^p\cup X^p)\Box X^q)=(C_r^p\cup X^p) \Box (C_r^q\cup X^q)$, and hence
\begin{equation}\label{case1}
|X_1|=|C_r^p\cup X^p||C_r^q\cup X^q|=(|C_r^p|+|X^p|)(|C_r^q|+|X^q|)=c^p(G)c^q(H).
\end{equation}

\setcounter{claim}{0}
\begin{claim}\label{U-claim}
$(X_1,X_2)$ is a $g$-good-neighbor indistinguishable pair in $G\Box H$ under the PMC model.
\end{claim}
\begin{proof}
Otherwise, $(X_1,X_2)$ is a $g$-good-neighbor distinguishable pair in $G\Box H$, that is,
\begin{itemize}
    \item there exist two adjacent vertices $(u,v),(u',v')$ such that $(u,v)\in X_1\triangle X_2$ and $(u',v')\in \overline{X_1\cup X_2}$.

    \item $X_1,X_2$ are $g$-good-neighbor conditional faulty sets in $G\Box H$.
\end{itemize}

From the structure of Cartesian product, $uu'\in E(G)$ and $v=v'$, or $vv'\in E(H)$ and $u=u'$. Without loss of generality, let $uu'\in E(G)$ and $v=v'$. Then
$$
(u,v)\in (X_1\triangle X_2)\cap G(v)=C^p_r\Box \{v\}=C_r^p(v)
$$
and
$$(u',v)\in (\overline{X_1\cup X_2})\cap G(v);
$$
see Fig. \ref{Figure5}.
Then $u\in C_r^p(v)$ and $u'\in G(v)-C_r^p(v)-X^p(v)$, contradiction.
\begin{figure}[!htbp]
\centering
\begin{minipage}{0.45\linewidth}
\vspace{3pt}
\centerline{\includegraphics[width=7.5cm]
{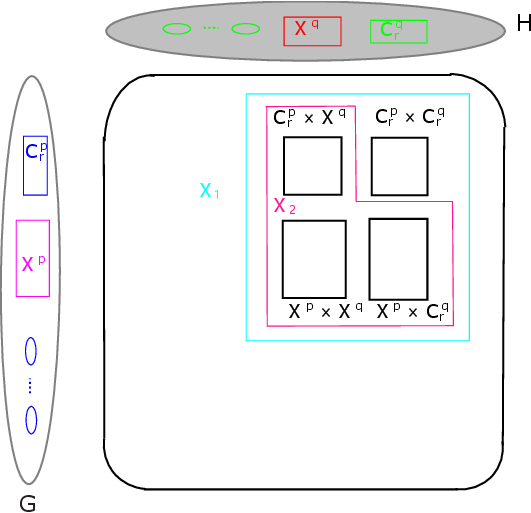}}
\caption{Two subsets $X_1$ and $X_2$, where ``$\times$'' denote the Cartesian product $\Box$.}\label{Figure4}
\end{minipage}
\begin{minipage}{0.45\linewidth}
\vspace{3pt}
\centerline{\includegraphics[width=7.5cm]
{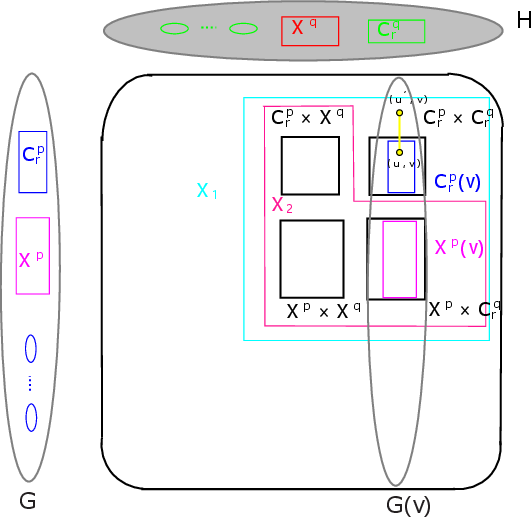}}
\caption{The edge $(u,v)(u',v)$ in $G\Box H$, where ``$\times$'' denote the Cartesian product $\Box$.}\label{Figure5}
\end{minipage}
\end{figure}
\end{proof}

From Claim \ref{U-claim}, $(X_1,X_2)$ is an indistinguishable pair in $G\Box H$ under the PMC model.

For any $(u,v)\in G\Box H$, if $(u,v)\in C^p_r\Box C_r^q$, then $deg_{G\Box H-X_2}((u,v))\geq p+q=g$.
Suppose that $(u,v)\in \overline{X_1}$. If $v\in X^q\cup C_r^q$, then $u\in V(G)-(X^p\cup C_r^p)$ and hence
$deg_{(G\Box H)-X_1}((u,v))\geq p+\delta(H)\geq p+q=g$.
If $v\in V(H)-(X^q\cup C_r^q)$, then
$deg_{(G\Box H)-X_1}((u,v))\geq q+\delta(G)\geq p+q=g$.
Clearly, $X_1$ is a $g$-good-neighbor conditional faulty set in $G\Box H$. Similarly, $X_2$ is a $g$-good-neighbor conditional faulty set in $G\Box H$.

From the above argument and Eqs. (\ref{case1}) and (\ref{eq-t-c}), we have
$$
t^{g}(G\square H)\leq |X_1|-1=c^p(G)c^q(H)-1=(t^{p}(G)+1)(t^{q}(H)+1)-1.
$$

\begin{case}\label{case2}
$G\in \mathcal{G}_2$ and $H\in \mathcal{G}_1$.
\end{case}

In this case, $|A_1^p|+|X^p|=c^p(G)$ and  $|C_r^q|+|X^q|=c^q(H)$.
Let
$$
X_1= (A_1^p\cup B_1^p\cup X^p)\Box (X^q\cup C_r^q)-(B_1^p\Box C_r^q)
$$
and
$$
X_2=(A_1^p\cup B_1^p\cup X^p)\Box (X^q\cup C_r^q)-(A_1^p\Box C_r^q);
$$
see Fig. \ref{Figure66}.
Then
$$
\begin{aligned}
X_1&=(A_1^p\cup B_1^p\cup X^p)\Box (X^q\cup C_r^q)-(B_1^p\Box C_r^q)\\
&=((A_1^p\cup X^p)\Box (X^q\cup C_r^q))\cup (B_1^p\Box (X^q\cup C_r^q)) -(B_1^p\Box C_r^q).
\end{aligned}
$$
Since $|B_1^p|\geq p+1$, $|C_r^q|\geq q+1$, and $c^p(G)=|A_1^p|+|X^p|\geq |B_1^p|+|X^p|$, it follows that $|B_1^p|\leq c^p(G)-|X^p|\leq c^p(G)-\kappa^p(G)$, and
\begin{equation}
\begin{aligned}
|X_1|&\leq (|A_1^p|+|X^p|)(|X^q|+ |C_r^q|)+|B_1^p|(|X^q|+ |C_r^q|)-(p+1)(q+1)\\
&\leq c^p(G)c^q(H)+c^q(H)(c^p(G)-\kappa^p(G))-(p+1)(q+1).
\end{aligned}
\end{equation}

Similarly to the proof of Claim \ref{U-claim}, we can prove that $(X_1,X_2)$ is a $g$-good-neighbor indistinguishable pair in $G\Box H$ under the PMC model.

For any $(u,v)\in G\Box H$,
suppose that $(u,v)\in \overline{X_1}$. Then $(u,v)\in B_1^p\Box C_r^q$ or $(u,v)\in \overline{X_1\cup X_2}$. If $(u,v)\in B_1^p\Box C_r^q$, then $deg_{G\Box H-X_1}((u,v))\geq p+q=g$.
Suppose that $(u,v)\in \overline{X_1\cup X_2}$.
If $v\in X^q\cup C_r^q$, then $u\in V(G)-(X^p\cup C_1^p)$ and hence
$deg_{(G\Box H)-X_1}((u,v))\geq p+\delta(H)\geq p+q=g$.
If $v\in V(H)-(X^q\cup C_r^q)$, then
$deg_{(G\Box H)-X_1}((u,v))\geq q+\delta(G)\geq p+q=g$.

Suppose that $(u,v)\in \overline{X_2}$. Then $(u,v)\in A_1^p\Box C_r^q$ or $(u,v)\in \overline{X_1\cup X_2}$. If $(u,v)\in A_1^p\Box C_r^q$, then $deg_{G\Box H-X_2}((u,v))\geq p+q=g$.
Suppose that $(u,v)\in \overline{X_1\cup X_2}$.
If $v\in X^q\cup C_r^q$, then $u\in V(G)-(X^p\cup C_1^p)$ and hence
$deg_{(G\Box H)-X_1}((u,v))\geq p+\delta(H)\geq p+q=g$.
If $v\in V(H)-(X^q\cup C_r^q)$, then
$deg_{(G\Box H)-X_2}((u,v))\geq q+\delta(G)\geq p+q=g$.

Clearly, $X_1$ is a $g$-good-neighbor conditional faulty set in $G\Box H$. Similarly, $X_2$ is a $g$-good-neighbor conditional faulty set in $G\Box H$.

From the above argument, we have
$$
\begin{aligned}
t^{g}(G\square H)\leq |X_1|-1&\leq c^p(G)c^q(H)+c^q(H)(c^p(G)-\kappa^p(G))-(p+1)(q+1)-1\\
&\leq 2(t^{p}(G)+1)(t^{q}(H)+1)-(t^q(H)+1)\kappa^p(G)-(p+1)(q+1)-1.
\end{aligned}
$$

\begin{case}
$G\in \mathcal{G}_1$ and $H\in \mathcal{G}_2$.
\end{case}

In this case, $|C_r^p|+|X^p|=c^p(G)$ and  $|A_1^q|+|X^q|=c^q(H)$.
Let
$$
X_1= (C_r^p\cup X^p)\Box (X^q\cup A_1^q\cup B_1^q)-(B_1^q\Box C_r^p)
$$
and
$$
X_2=(C_r^p\cup X^p)\Box (X^q\cup A_1^q\cup B_1^q)-(A_1^q\Box C_r^p);
$$
see Fig. \ref{Figure7}.
Then $X_1,X_2\subseteq V(G\Box H)$.
By symmetry, similarly to the proof of Case \ref{case2}, we can prove that $t^{g}(G\square H)\leq 2(t^{p}(G)+1)(t^{q}(H)+1)-(t^p(G)+1)\kappa^q(H)-(p+1)(q+1)-1.$

\begin{case}\label{case4}
$G\in \mathcal{G}_2$ and $H\in \mathcal{G}_2$.
\end{case}

In this case, $|A_1^p|+|X^p|=c^p(G)$ and  $|A_1^q|+|X^q|=c^q(H)$.
Let
$$
\begin{aligned}
X_1
&=(A_1^p\cup B_1^p\cup X^p)\Box (A_1^q\cup B_1^q\cup X^q)-(B_1^p\Box A_1^q)- (A_1^p\Box B_1^q)
\end{aligned}
$$
and
$$
\begin{aligned}
X_2
&=(A_1^p\cup B_1^p\cup X^p)\Box (A_1^q\cup B_1^q\cup X^q)-(B_1^p\Box B_1^q)- (A_1^p\Box A_1^q);
\end{aligned}
$$
see Fig. \ref{Figure7}.
Since $|B_1^p|\geq p+1$, $|C_r^q|\geq q+1$, and $c^p(G)=|A_1^p|+|X^p|\geq |B_1^p|+|X^p|$, it follows that $|B_1^p|\leq c^p(G)-|X^p|\leq c^p(G)-\kappa^p(G)$, and
\begin{equation}
\begin{aligned}
|X_1|&\leq (|A_1^p|+|X^p|+c^p(G)-\kappa^p(G))(|X^q|+ |A_1^q|+c^q(H)-\kappa^q(H))-2(p+1)(q+1)\\
&\leq (2c^p(G)-\kappa^p(G))(2c^q(H)-\kappa^q(H))-2(p+1)(q+1).
\end{aligned}
\end{equation}

Similarly to the proof of Claim \ref{U-claim}, we can prove that $(X_1,X_2)$ is a $g$-good-neighbor indistinguishable pair in $G\Box H$ under the PMC model.
\begin{figure}[!htbp]
\centering
\begin{minipage}{0.45\linewidth}
\vspace{3pt}
\centerline{\includegraphics[width=7.5cm]
{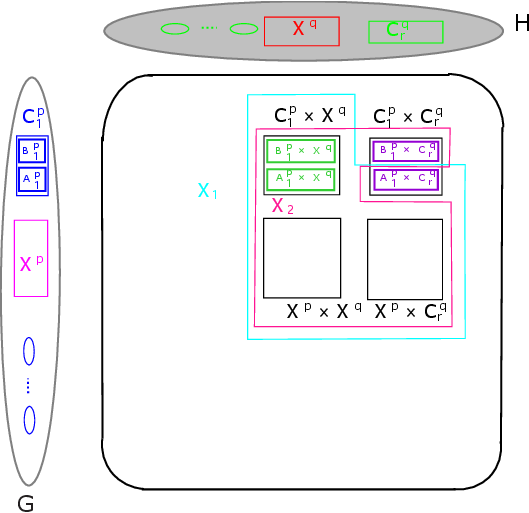}}
\caption{Graphs for Case \ref{case2}, where ``$\times$'' denote the Cartesian product $\Box$.}\label{Figure66}
\end{minipage}
\begin{minipage}{0.45\linewidth}
\vspace{3pt}
\centerline{\includegraphics[width=7.5cm]
{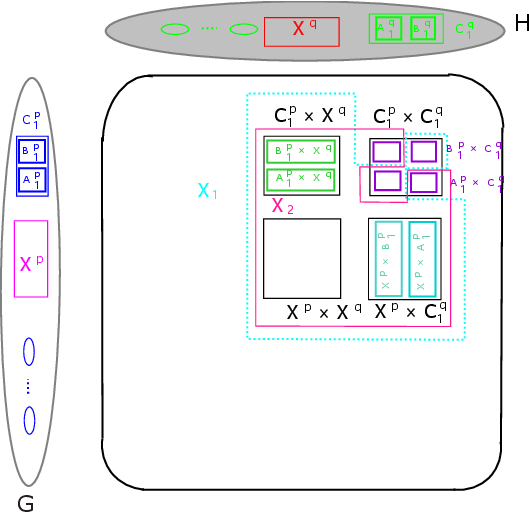}}
\caption{Graphs for Case \ref{case4}, where ``$\times$'' denote the Cartesian product $\Box$.}\label{Figure7}
\end{minipage}
\end{figure}

For any $(u,v)\in G\Box H$,
suppose that $(u,v)\in \overline{X_1}$. Then $(u,v)\in B_1^p\Box A_1^q$ or $(u,v)\in A_1^p\Box B_1^q$ or $(u,v)\in \overline{X_1\cup X_2}$. If $(u,v)\in B_1^p\Box A_1^q$, then $deg_{G\Box H-X_1}((u,v))\geq p+q=g$.
If $(u,v)\in A_1^p\Box B_1^q$, then $deg_{G\Box H-X_1}((u,v))\geq p+q=g$.
Suppose that $(u,v)\in \overline{X_1\cup X_2}$.
If $v\in X^q\cup C_1^q$, then $u\in V(G)-(X^p\cup C_1^p)$ and hence
$deg_{(G\Box H)-X_1}((u,v))\geq p+\delta(H)\geq p+q=g$.
If $v\in V(H)-(X^q\cup C_1^q)$, then
$deg_{(G\Box H)-X_1}((u,v))\geq q+\delta(G)\geq p+q=g$.

Suppose that $(u,v)\in \overline{X_2}$. Then $(u,v)\in A_1^p\Box A_1^q$ or $(u,v)\in B_1^p\Box B_1^q$ or $(u,v)\in \overline{X_1\cup X_2}$. If $(u,v)\in A_1^p\Box A_1^q$, then $deg_{G\Box H-X_2}((u,v))\geq p+q=g$. If $(u,v)\in B_1^p\Box B_1^q$, then $deg_{G\Box H-X_2}((u,v))\geq p+q=g$.
Suppose that $(u,v)\in \overline{X_1\cup X_2}$.
If $v\in X^q\cup C_1^q$, then $u\in V(G)-(X^p\cup C_1^p)$ and hence
$deg_{(G\Box H)-X_1}((u,v))\geq p+\delta(H)\geq p+q=g$.
If $v\in V(H)-(X^q\cup C_1^q)$, then
$deg_{(G\Box H)-X_2}((u,v))\geq q+\delta(G)\geq p+q=g$.

From the above argument, we have
$$
\begin{aligned}
t^{g}(G\square H)\leq |X_1|-1&\leq (2c^p(G)-\kappa^p(G))(2c^q(H)-\kappa^q(H))-2(p+1)(q+1)\\
&\leq (2t^{p}(G)-\kappa^p(G)+2)(2t^{q}(H)-\kappa^q(H)+2)-2(p+1)(q+1)-1.
\end{aligned}
$$
\end{proof}

Let $D_n$ be a graph obtained from two cliques $K_n$ by adding a new vertex $v$ and adding one edge between $v$ and each clique.
\begin{figure}[!hbpt]
\begin{center}
\includegraphics[scale=0.6]{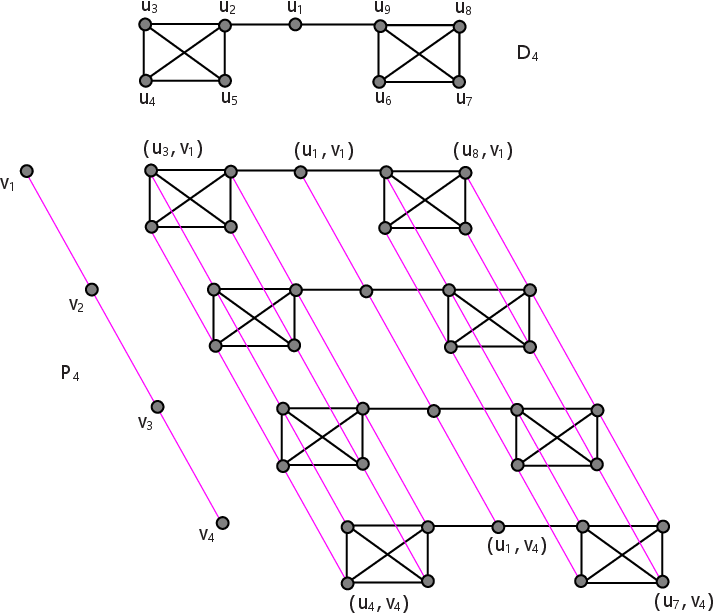}
\end{center}
\begin{center}
\caption{$D_4\Box P_4$}\label{Figure}
\end{center}
\end{figure}

\begin{proposition}\label{pro-DP}
$c^3(D_4\Box P_n)=10$ for $n\geq 4$.
\end{proposition}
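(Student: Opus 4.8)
The plan is to establish $c^{3}(D_4\Box P_n)\le 10$ and $c^{3}(D_4\Box P_n)\ge 10$ separately. Write $V(D_4)=A\cup B\cup\{v\}$ with cliques $A=\{a_1,a_2,a_3,a_4\}$, $B=\{b_1,b_2,b_3,b_4\}$ and $va_1,vb_1\in E(D_4)$, and let $P_n$ have vertices $1,2,\dots,n$, so that a vertex of $D_4\Box P_n$ is a pair $(x,i)$. I view the graph as two tubes $T_A=\{(a_j,i)\}$ and $T_B=\{(b_j,i)\}$, each isomorphic to $K_4\Box P_n$, glued to a spine $S=\{(v,i):1\le i\le n\}$ by the edges $(a_1,i)(v,i)$ and $(b_1,i)(v,i)$. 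The crucial feature is that the only low-degree vertices lie on the spine: $\deg(v,1)=\deg(v,n)=3$ and $\deg(v,i)=4$ for $2\le i\le n-1$, whereas every tube vertex has degree at least $4$.

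For the upper bound I would take
$$
X=\{(a_1,2),(a_2,2),(a_3,2),(a_4,2),(v,1),(v,2)\},\qquad |X|=6 .
$$
Deleting $X$ detaches the end clique $Q=\{(a_j,1):1\le j\le 4\}\cong K_4$, since the only neighbours of $Q$ outside $Q$ are the four vertices $(a_j,2)$ and the vertex $(v,1)$, all lying in $X$. One checks that $\delta\big((D_4\Box P_n)-X\big)\ge 3$: the only vertex in danger is $(v,3)$, whose surviving neighbours are $(a_1,3),(b_1,3),(v,4)$, and it is exactly here that $n\ge 4$ is needed. Since $K_4$ cannot be split into two induced subgraphs of minimum degree $\ge 3$ (this would need at least $8$ vertices), $Q$ is a non-splittable component, a legitimate $C_r$ with $|C_r|=4$, and therefore $c^{3}\le |X|+|C_r|=6+4=10$.

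For the lower bound I would show that every $3$-good-neighbor cut $X$ satisfies $|X|+\min\{|A_1|,|C_r|\}\ge 10$. Let $P$ realise this minimum, so $\delta(G[P])\ge 3$ and hence $|P|\ge 4$, with equality only when $P$ induces $K_4$; inside $D_4\Box P_n$ the only $K_4$ subgraphs are the single-layer cliques $\{(a_j,i):j\}$ and $\{(b_j,i):j\}$, so in that case $P$ is one of them. The decisive mechanism is the spine. Detaching the end clique $\{(a_j,1):j\}$ deletes its $5$-vertex boundary $\{(a_j,2):j\}\cup\{(v,1)\}$, which leaves $(v,2)$ with only the two surviving neighbours $(b_1,2),(v,3)$, forcing $(v,2)\in X$ and hence $|X|\ge 6$; an interior clique-layer has a $9$-vertex boundary, forcing $|X|\ge 9$; and obtaining $|A_1|=4$ requires a splittable component on at least $8$ vertices, whose boundary together with the same spine obstruction gives $|X|\ge 7$. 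In each of these configurations $|X|+|P|\ge 10$.

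It remains to treat $|P|\ge 5$. For $|P|\ge 6$ I would use $|X|\ge \kappa^{3}(D_4\Box P_n)\ge 4$, the bound $\kappa^{3}\ge 4$ being a short separate check (a proper vertex subset of minimum degree $\ge 3$ has vertex-boundary at least $4$), whence $|X|+|P|\ge 4+6=10$; the single residual value $|P|=5$ is handled by the same boundary bookkeeping as a finite verification. Combining the two bounds yields $c^{3}(D_4\Box P_n)=10$. The main obstacle is the lower bound, specifically the quantitative claim that peeling off one end clique-layer must cost $|X|\ge 6$; the binding constraint is the single degree-$4$ spine vertex adjacent to the removed layer, and it is this structural fact that pins the value at exactly $10$ rather than at the weaker generic bound $\kappa^{3}+4$, which here (as $\kappa^{3}$ can be as small as $4$) only gives $c^{3}\ge 8$.
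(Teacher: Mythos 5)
Your upper bound is exactly the paper's construction: delete one end spine vertex, the next spine vertex, and the second layer of one tube, detaching an end $K_4$, giving $c^3\le 6+4=10$ (both proofs need $n\ge 4$ precisely so that the spine vertex in the third column keeps three neighbours). For the lower bound you take a genuinely different route. The paper argues only on the finite graph $D_4\Box P_4$: it fixes the spine $Y=\{(u_1,v_i)\}$ and runs a case analysis on $|Y\cap X|$ for a hypothetical cut with $|X|\le 5$, showing each case yields a low-degree vertex or leaves $G-X$ connected, and then extends to all $n$ via the asserted inequality $c^3(D_4\Box P_n)\ge c^3(D_4\Box P_4)$ --- a monotonicity step the paper does not justify and which is not obvious, since $c^g$ has no evident monotonicity under adding layers. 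You instead argue uniformly in $n$ by classifying the possible realizers $P$ of $\min\{|A_1|,|C_r|\}$: cliques of a Cartesian product lie in single layers, so $|P|=4$ forces $P$ to be a layer $K_4$, and boundary counting plus the spine obstruction gives $|X|\ge 6$ (end layer), $|X|\ge 9$ (interior layer), and $|X|\ge 7$ (the splittable case $|A_1|=4$, where the component must be two consecutive layers of one tube); for $|P|\ge 6$ you invoke $\kappa^3\ge 4$. Your version buys a proof valid for all $n\ge 4$ with no monotonicity gap; the paper's buys a fully finite verification on $36$ vertices but leaves the passage to general $n$ unsupported. The two analyses also slice the problem differently: the paper cases on how the cut meets the spine, you case on the size and shape of the detached piece.

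Two points to repair in your write-up. First, your parenthetical justification of $\kappa^3(D_4\Box P_n)\ge 4$ --- ``a proper vertex subset of minimum degree $\ge 3$ has vertex-boundary at least $4$'' --- is false in this graph: the complement of the closed neighbourhood of $(v,1)$ has minimum degree $3$ and vertex-boundary $\{(a_1,1),(b_1,1),(v,2)\}$ of size exactly $3$. The correct argument is that the only $3$-cuts of $D_4\Box P_n$ ($n\ge 4$) are $N((v,1))$ and $N((v,n))$: a $3$-cut cannot disconnect a $4$-connected tube $K_4\Box P_n$ internally, and separating the two tubes would require hitting one of $(v,i),(a_1,i),(b_1,i)$ in every column $i$, hence $n\ge 4$ vertices; each of the two remaining cuts isolates a spine endpoint, violating $\delta(G-X)\ge 3$, so no $3$-good-neighbor cut has size $3$. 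Second, your residual case $|P|=5$ is in fact vacuous rather than a finite verification: a $5$-vertex set of minimum degree $3$ would need, in each of its two extreme $D_4$-layers, at least three vertices inducing minimum degree $\ge 2$, which already costs six vertices. With these two repairs your argument is complete and, on the lower bound, more rigorous than the paper's.
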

\begin{proof}
Let $u_1\in V(D_4)$ such that $D_4-u_1$ contains two cliques of order $4$, say $K_4^1,K_4^2$. Let $V(K_4^1)=\{u_i\,|\,2\leq i\leq 5\}$ and $V(K_8^2)=\{u_i\,|\,6\leq i\leq 9\}$, and let $P_n=v_1v_2...v_n$ be the path of order $n$.

Let $X=\{(u_1,v_1),(u_1,v_2)\}\cup \{(u_i,v_2)\,|\,6\leq i\leq 9\}$. Then $G-X$ is not connected, and $\delta(G-X)\geq 3$. Therefore, $c^3(D_4\Box P_n)\leq |X|+|\{(u_i,v_1)\,|\,6\leq i\leq 9\}|=10$.

Let $G=D_4\Box P_4$, and let $Y=\{(u_1,v_i)\,|\,1\leq i\leq 4\}$. From the definition of $c^3(G)$, there exists a $X\subseteq V(G)$ with $|X|\geq \kappa^3(G)$ and a subset $C_r$ or $A_1$ such that $c^3(G)=|X|+|C_r|$ or $c^3(G)=|X|+|A_1|$.
If $|X|\geq 6$, then $c^3(G)=|X|+|C_r|\geq 6+4=10$ or $c^3(G)=|X|+|A_1|\geq 6+4=10$.
Suppose that $|X|\leq 5$.
If $|Y\cap X|=4$, then $\delta(G-X)\geq 3$, and $c^3(G)=|X|+|C_r|=4+15=19$ or $c^3(G)=|X|+|A_1|=4+\lceil 15/2\rceil=12$.
If $|Y\cap X|=3$, then there is a vertex of degree $2$ in $G-X$, a contradiction. Suppose that $|Y\cap X|=2$. Then $Y\cap X=\{(u_1,v_1),(u_1,v_2)\}$ or $Y\cap X=\{(u_1,v_1),(u_1,v_4)\}$ or $Y\cap X=\{(u_1,v_3),(u_1,v_4)\}$, and hence $|X\cap (V(G)-(X\cap Y)|\leq 3$. To guarantee $\delta(G-X)\geq 3$, $G-X$ is connected, a contradiction. Suppose that $|Y\cap X|=1$. Then $Y\cap X=\{(u_1,v_1)\}$ or $\{(u_1,v_4)\}$, and hence $|X\cap (V(G)-(X\cap Y)|\leq 4$. By symmetry, we assume that $Y\cap X=\{(u_1,v_4)\}$.
To guarantee $\delta(G-X)\geq 3$, we have  $(u_2,v_3),(u_9,v_3)\notin X$, and hence $G-X$ is connected, a contradiction. Suppose that $|Y\cap X|=0$. Then
$|X\cap (V(G)-(X\cap Y)|\leq 5$. To guarantee $\delta(G-X)\geq 3$, $G-X$ is connected, a contradiction. From the above arguments, we have $c^3(G)\geq 10$, and hence $c^3(D_4\Box P_n)\geq c^3(D_4\Box P_4)\geq 10$, and hence $c^3(D_4\Box P_n)=10$.
\end{proof}

The following corollary shows that the bound in Theorem \ref{th-Upper-D} $(i)$ is sharp.
\begin{corollary}\label{cor-9}
$t^3(D_4\Box P_n)=9$ for $n\geq 5$.
\end{corollary}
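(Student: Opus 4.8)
The plan is to obtain the corollary directly from the gc-number computation already carried out in Proposition \ref{pro-DP} together with the formula of Corollary \ref{th-cor} (equivalently Theorem \ref{th-Upper-gc-condition}), which converts a value of the gc number into the exact value of the diagnosability. Since Proposition \ref{pro-DP} gives $c^3(D_4\Box P_n)=10$ for all $n\ge 4$, essentially all that remains is to confirm that $G\Box H=D_4\Box P_n$ meets the two hypotheses of Corollary \ref{th-cor} --- that $\kappa^3(D_4\Box P_n)$ exists and that $c^3(D_4\Box P_n)\le\lceil 9n/2\rceil$ --- and then to read off $t^3(D_4\Box P_n)=c^3(D_4\Box P_n)-1$.

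First I would record that $|V(D_4)|=9$ and $|V(P_n)|=n$, so $D_4\Box P_n$ has $9n$ vertices. For the existence of $\kappa^3(D_4\Box P_n)$ I would reuse the cut exhibited in the proof of Proposition \ref{pro-DP}: the set $X=\{(u_1,v_1),(u_1,v_2)\}\cup\{(u_i,v_2):6\le i\le 9\}$ satisfies $\delta\bigl((D_4\Box P_n)-X\bigr)\ge 3$ and disconnects $D_4\Box P_n$, so it is a $3$-good-neighbor cut and $\kappa^3(D_4\Box P_n)$ is well defined. For the size condition I would note that $c^3(D_4\Box P_n)=10$ while $\lceil 9n/2\rceil\ge\lceil 45/2\rceil=23$ for every $n\ge 5$, so the inequality $c^3(D_4\Box P_n)=10\le\lceil 9n/2\rceil$ holds with room to spare. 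With both hypotheses verified, Corollary \ref{th-cor} yields
$$
t^3(D_4\Box P_n)=c^3(D_4\Box P_n)-1=10-1=9,
$$
as claimed.

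The proof thus carries no genuine obstacle beyond Proposition \ref{pro-DP} itself; the entire difficulty is concentrated in establishing $c^3(D_4\Box P_n)=10$ (no $3$-good-neighbor cut can isolate a component-or-part plus cut of total size less than $10$), which is exactly that proposition's content, while the two hypotheses checked above are both immediate. The only delicate point in the present step is the threshold $n\ge 5$: the direct computation already gives the value for $n\ge 4$, and the restriction to $n\ge 5$ is inherited from the sharpness framing of Theorem \ref{th-Upper-D}$(i)$, where the constraints $g\le\delta(D_4)+q$ and $g\le\delta(P_n)+p$ force the splitting $p=2,\ q=1$ and hence require $\kappa^1(P_n)$ to exist --- and a single-vertex $1$-good-neighbor cut of a path exists only for $n\ge 5$. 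Once the gc number is in hand, $t^3(D_4\Box P_n)=9$ follows at once.
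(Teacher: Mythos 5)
Your proposal is correct and follows essentially the same route as the paper: invoke Proposition \ref{pro-DP} for $c^3(D_4\Box P_n)=10$ and then apply Corollary \ref{th-cor} (i.e.\ Theorem \ref{th-Upper-gc-condition}) to get $t^3(D_4\Box P_n)=10-1=9$. The only difference is that you explicitly verify the hypotheses of Corollary \ref{th-cor} (existence of $\kappa^3(D_4\Box P_n)$ via the cut $X$ and the bound $10\le\lceil 9n/2\rceil$), which the paper leaves implicit.
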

\begin{proof}
From Proposition \ref{pro-DP} and Corollary \ref{th-cor}, we have $t^3(D_4\Box P_n)=c^3(D_4\Box P_n)-1= 10-1=9$.
\end{proof}

\begin{remark}
From Corollary \ref{cor-9}, we have $
t^{3}(D_4\Box P_n)=9=10-1=(t^{0}(P_n)+1)(t^{3}(D_4)+1)-1$. Note that $p=0$, $q=3$ and $g=3$. This shows that the bound in Theorem \ref{th-Upper-D} $(i)$ is sharp.
\end{remark}

\begin{proposition}\label{pro-DP-2}
$c^3(D_8\Box P_n)=14$ for $n\geq 10$.
\end{proposition}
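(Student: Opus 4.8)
The plan is to prove both $c^3(D_8\Box P_n)\le14$ and $c^3(D_8\Box P_n)\ge14$, following the template of Proposition \ref{pro-DP}. Write $G=D_8\Box P_n$, let $u_1$ be the degree-$2$ vertex of $D_8$ joining the cliques $K_8^1=\{u_2,\dots,u_9\}$ and $K_8^2=\{u_{10},\dots,u_{17}\}$ through the edges $u_1u_2$ and $u_1u_{10}$, and let $P_n=v_1v_2\cdots v_n$.

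For the upper bound I would use the explicit cut $X=\{(u_1,v_1),(u_1,v_2)\}\cup\{(u_i,v_2):10\le i\le17\}$ of size $10$. Removing $X$ splits off the clique $K_8^2(v_1)$, and the only fragile spine vertex $(u_1,v_3)$ keeps its three neighbours $(u_1,v_4),(u_2,v_3),(u_{10},v_3)$ (this is where $n\ge4$ enters), so $\delta(G-X)\ge3$. The isolated component is a $K_8$, which partitions into two copies of $K_4$ with $\delta\ge3$; hence it contributes $|A_1|=4$ in Definition \ref{def3-1}, while every other component is large, giving $c^3(G)\le|X|+4=14$.

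For the lower bound the cleanest route is Theorem \ref{th-gc-Lower}, which gives $c^3(G)\ge\kappa^3(G)+g+1=\kappa^3(G)+4$; since the cut above already yields $\kappa^3(G)\le10$, it suffices to prove $\kappa^3(G)\ge10$, i.e.\ that for $n\ge10$ no $3$-good-neighbor cut has size at most $9$. Here I would exploit that $n>9$ forces a clean column $v_{j_0}$ with $X\cap D_8(v_{j_0})=\emptyset$: since every $G$-path between the two sides of $v_{j_0}$ must cross the intact connected copy $D_8(v_{j_0})$, any component of $G-X$ not meeting $v_{j_0}$ lies strictly on one side and therefore reaches a path-endpoint, reducing the problem to bounding the cost of cutting off such a ``corner'' piece. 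The degree-$2$ spine is the engine of the lower bound: severing a corner clique-column from $u_1$ drops some $(u_1,v_j)$ to degree $2$ and drags it into $X$.

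The step I expect to fight hardest is this final corner count, and I would be wary that it is where the target value is most fragile. For $D_4$ the corner is an indivisible $K_4$, so its boundary together with one forced spine vertex costs $6$ and yields $10$. For $D_8$ the corner clique is a $K_8$, which contains $K_4$ subcliques, and isolating a $K_4$ that avoids the bridge row $u_{10}$ appears to need only the four complementary vertices $(u_{10},v_1),\dots,(u_{13},v_1)$, the four path vertices $(u_{14},v_2),\dots,(u_{17},v_2)$, and the one forced vertex $(u_1,v_1)$ — nine in all — after which $(u_1,v_2)$ survives with degree $3$ because $(u_{10},v_2)$ is untouched. Such a cut would isolate a $K_4$ with $|X|=9$, forcing $c^3(G)\le9+4=13$. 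Ruling this partial-clique corner out (or else correcting the claimed value to $13$) is therefore the crux, and it is exactly the phenomenon absent from the $D_4$ analysis.
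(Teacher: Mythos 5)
Your upper bound is exactly the paper's: the ten-vertex cut $\{(u_1,v_1),(u_1,v_2)\}\cup\{(u_i,v_2)\,:\,10\le i\le 17\}$ isolates the column clique on $\{(u_i,v_1)\,:\,10\le i\le 17\}$, which splits into two $K_4$'s, giving $c^3(D_8\Box P_n)\le 10+4=14$. The divergence is in the lower bound, and the ``crux'' you flag is not a technical obstacle you failed to overcome --- it is a genuine counterexample to the stated value. Your nine-vertex set $X=\{(u_{10},v_1),(u_{11},v_1),(u_{12},v_1),(u_{13},v_1)\}\cup\{(u_{14},v_2),(u_{15},v_2),(u_{16},v_2),(u_{17},v_2)\}\cup\{(u_1,v_1)\}$ really is a $3$-good-neighbor cut: every neighbour of $(u_{14},v_1),\dots,(u_{17},v_1)$ outside these four vertices (namely rows $10$--$13$ of column $v_1$ and rows $14$--$17$ of column $v_2$) lies in $X$, so they form a $K_4$ component of minimum degree $3$; and in the remaining component every vertex keeps degree at least $3$ --- in particular $(u_1,v_2)$ keeps $(u_2,v_2)$, $(u_{10},v_2)$, $(u_1,v_3)$; the vertices $(u_{11},v_2),(u_{12},v_2),(u_{13},v_2)$ keep degree $4$; $(u_{10},v_2)$ keeps degree $5$; and $(u_2,v_1)$ keeps degree $8$. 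Hence $\kappa^3(D_8\Box P_n)\le 9$, and applying Definition \ref{def3-1} to this $X$ with $C_r$ equal to the $K_4$ component gives $c^3(D_8\Box P_n)\le 9+4=13<14$. So Proposition \ref{pro-DP-2} as stated is false, no blind proof of it can be completed, and your alternative plan of establishing $\kappa^3\ge 10$ via Theorem \ref{th-gc-Lower} was doomed for the same reason.

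It is worth recording where the paper's own argument breaks: in its case analysis with $Y=\{(u_1,v_i)\,:\,1\le i\le 10\}$, the subcase $|Y\cap X|=1$ and $|X\cap(V(G)-Y)|=8$ is dismissed with the bare assertion that ``to guarantee $\delta(G-X)\ge 3$, $G-X$ is connected,'' given without justification; your cut lies exactly in this subcase and falsifies that assertion. The error propagates: Corollary \ref{cor-14} ($t^3(D_8\Box P_n)=13$) and the remark asserting sharpness of Theorem \ref{th-Upper-D} $(ii)$--$(iii)$ both rest on Proposition \ref{pro-DP-2}; indeed, combining your cut with Corollary \ref{th-cor} yields $t^3(D_8\Box P_n)=c^3(D_8\Box P_n)-1\le 12$. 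The phenomenon is exactly as you diagnose: in $D_4\Box P_n$ the corner clique column is an indivisible $K_4$, so the value $10$ of Proposition \ref{pro-DP} survives, but in $D_8\Box P_n$ the corner $K_8$ column contains proper $K_4$ subcliques whose isolation costs only nine vertices, which the paper's case analysis overlooks.
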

\begin{proof}
Let $u_1\in V(D_8)$ such that $D_8-u_1$ contains two cliques of order $8$, say $K_8^1,K_8^2$. Let $V(K_8^1)=\{u_i\,|\,2\leq i\leq 8\}$ and $V(K_8^2)=\{u_i\,|\,10\leq i\leq 17\}$, and let $P_n=v_1v_2...v_n$ be the path.

Let $X=\{(u_1,v_1),(u_1,v_2)\}\cup \{(u_i,v_2)\,|\,10\leq i\leq 17\}$. Then $G-X$ is not connected, and $\delta(G-X)\geq 3$. Therefore, $c^3(D_4\Box P_n)\leq |X|+|\{(u_i,v_1)\,|\,10\leq i\leq 17\}|/2=10+4=14$.

Let $G=D_8\Box P_{10}$, and let $Y=\{(u_1,v_i)\,|\,1\leq i\leq 10\}$. From the definition of $c^3(G)$, there exists a $X\subseteq V(G)$ with $|X|\geq \kappa^3(G)$ and a subset $C_r$ or $A_1$ such that $c^3(G)=|X|+|C_r|$ or $c^3(G)=|X|+|A_1|$.
If $|X|\geq 10$, then it follows that $c^3(G)=|X|+|C_r|\geq 10+4=14$ or $c^3(G)=|X|+|A_1|\geq 10+4=14$. Suppose that $|X|\leq 9$.
If $|Y\cap X|\geq 2$, then $|V(K_8^i\Box P_n)\cap X|\leq 7$, $(K_8^i\Box P_n)-X$ is connected, since $\kappa(K_8^i\Box P_n)\geq 8$ for $i=1,2$.. This together with $n\geq 10$, $G-X$ is connected, a contradiction. So $|Y\cap X|\leq 1$.
Suppose that $|Y\cap X|=1$. Then $Y\cap X=\{(u_1,v_j)\}$ for some $j \ (1\leq j\leq 10)$. If $|X\cap (V(G)-(X\cap Y)|< 8$, it follows that $G-X$ is connected, a contradiction. So $|X\cap (V(G)-(X\cap Y)|= 8$. To guarantee $\delta(G-X)\geq 3$, $G-X$ is connected, a contradiction.
Suppose that $|Y\cap X|=0$. Then
$|X\cap (V(G)-Y)|=8$ or $9$. If $|X\cap (V(G)-Y)|=8$, then $G-X$ is connected, a contradiction.
Suppose that $|X\cap (V(G)-Y)|=9$.
To guarantee $\delta(G-X)\geq 3$, $G-X$ is connected, a contradiction.
From the above arguments, we have $c^3(G)\geq 14$, and hence $c^3(D_8\Box P_n)\geq c^3(D_8\Box P_{10})\geq 14$.
\end{proof}

The following corollary shows that the bound in Theorem \ref{th-Upper-D} $(iii)$ is sharp. By symmetry of Cartesian product, the bound in Theorem \ref{th-Upper-D} $(ii)$ is sharp.
\begin{corollary}\label{cor-14}
$t^3(D_8\Box P_n)=13$ for $n\geq 11$.
\end{corollary}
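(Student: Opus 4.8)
The plan is to combine Proposition~\ref{pro-DP-2} with Corollary~\ref{th-cor}, exactly mirroring the derivation of Corollary~\ref{cor-9} from Proposition~\ref{pro-DP}. Proposition~\ref{pro-DP-2} already gives $c^3(D_8\Box P_n)=14$ for $n\geq 10$, so the only remaining task is to verify that the hypotheses of Corollary~\ref{th-cor} hold for $G\Box H=D_8\Box P_n$ and then read off the value. First I would note that $\kappa^3(D_8\Box P_n)$ exists: the explicit $3$-good-neighbor cut $X$ constructed in the proof of Proposition~\ref{pro-DP-2} witnesses this, since $G-X$ is disconnected with $\delta(G-X)\geq 3$. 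Hence the existence condition required by Corollary~\ref{th-cor} is met.

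Next I would check the size condition $c^3(D_8\Box P_n)\leq \lceil nm/2\rceil$, where here $m=n$ (the path length) and the first factor is $|V(D_8)|=17$, so the total order is $17n$. Since $c^3(D_8\Box P_n)=14$, we need $14\leq \lceil 17n/2\rceil$, which holds comfortably for all $n\geq 2$ and in particular for $n\geq 11$. With both hypotheses verified, Corollary~\ref{th-cor} yields
$$
t^3(D_8\Box P_n)=c^3(D_8\Box P_n)-1=14-1=13.
$$

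The only subtlety is the range of $n$. Proposition~\ref{pro-DP-2} is stated for $n\geq 10$, whereas the corollary claims $n\geq 11$; since $\{n\geq 11\}\subseteq\{n\geq 10\}$, the computed value $c^3=14$ applies directly, and I would simply invoke the proposition for $n\geq 11$. I do not anticipate any genuine obstacle here: the entire argument is a one-line substitution into Corollary~\ref{th-cor}, and all the real work—namely establishing $c^3(D_8\Box P_n)=14$—has already been carried out in Proposition~\ref{pro-DP-2}. The proof is therefore essentially identical in structure to that of Corollary~\ref{cor-9}, differing only in the numerical values $14$ and $13$ in place of $10$ and $9$.
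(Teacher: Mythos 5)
Your proposal is correct and takes exactly the same route as the paper: the paper's proof is the one-line combination of Proposition~\ref{pro-DP-2} and Corollary~\ref{th-cor}, giving $t^3(D_8\Box P_n)=c^3(D_8\Box P_n)-1=14-1=13$. Your additional checks (that the cut $X$ from Proposition~\ref{pro-DP-2} witnesses the existence of $\kappa^3(D_8\Box P_n)$, and that $14\leq\lceil 17n/2\rceil$) are the hypotheses the paper leaves implicit, and they are verified correctly.
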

\begin{proof}
From Proposition \ref{pro-DP-2} and Corollary \ref{th-cor}, we have $t^3(D_8\Box P_n)= c^3(D_8\Box P_n)-1= 14-1=13$.
\end{proof}

\begin{remark}
From Corollary \ref{cor-14}, we have $
t^{3}(D_8\Box P_n)=13=14-1=4\times 5-2\times 1-4-1=2(t^{0}(P_n)+1)(t^{3}(D_8)+1)-(t^0(P_n)+1)\kappa^3(D_8)-(0+1)(3+1)-1$. Note that $p=0$, $q=3$ and $g=3$. This shows that the bound in Theorem \ref{th-Upper-D} $(i)$ is sharp.
\end{remark}

\section{Applications}\label{Section6}

In this section, we demonstrate the usefulness of the proposed constructions by applying them to some instances of Cartesian product networks.

\subsection{2-dimensional grid networks}

A two-dimensional grid graph is the Cartesian product $P_n \square P_m$ of path graphs on $n$ and $m$ vertices. For more details on the grid graph, we refer to \cite{CW98,IR88}.

\begin{proposition}\label{pro-6-1}
For network $P_n \square P_m(n \geq m \geq 5)$,
$$
c^{g}(P_n \square P_m)=
\begin{cases}
3 & \text { if } g=0, \\
5, & \text { if } g=1,\\
8, & \text { if } g=2.
\end{cases}
$$
\end{proposition}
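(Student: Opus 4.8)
The plan is to handle the three cases $g=0,1,2$ separately, in each case producing matching upper and lower bounds for $c^{g}(P_n\square P_m)$. I write vertices as $(i,j)$ with $1\le i\le n$, $1\le j\le m$, so the four corners have degree $2$, the other boundary vertices degree $3$, and interior vertices degree $4$. The structural fact that drives everything is that $P_n\square P_m$ is triangle-free with girth $4$: a subgraph of minimum degree $\ge g$ has at least $g+1$ vertices in general, and when $g=2$ triangle-freeness improves this to at least $4$, the smallest min-degree-$2$ subgraph being a $4$-cycle. For each upper bound I peel off the cheapest such piece sitting in a corner; the hypothesis $m\ge 5$ guarantees that after deleting the cut the rest of the grid still satisfies the relevant minimum-degree condition and stays connected.

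For $g=0$ a $0$-good-neighbor cut is an ordinary vertex cut, so $\kappa^{0}=\kappa(P_n\square P_m)=2$ (the grid is $2$-connected, and the two neighbors of a corner form a cut). Removing $N((1,1))=\{(1,2),(2,1)\}$ isolates $(1,1)=K_1$, so by Corollary \ref{cor-formula-c} with $|C_r|=g+1=1$ we get $c^{0}=\kappa^{0}+g+1=3$. For $g=1$ the cheapest min-degree-$1$ piece is the corner edge $\{(1,1),(1,2)\}=K_2$, whose outside neighborhood is $\{(2,1),(1,3),(2,2)\}$; deleting these three vertices leaves this $K_2$ together with a remainder of minimum degree at least $1$, giving $\kappa^{1}\le 3$. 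Since $K_2$ is not splittable into two min-degree-$1$ parts, it is a $C_r$ with $|C_r|=g+1=2$, so Corollary \ref{cor-formula-c} again applies once $\kappa^{1}\ge 3$ is known; and $\kappa^{1}\ge 3$ holds because every size-$2$ vertex cut of the grid isolates a corner vertex, which then has degree $0$ and violates the $1$-good-neighbor condition. Hence $c^{1}=\kappa^{1}+g+1=5$.

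The case $g=2$ is genuinely different, since triangle-freeness prevents Corollary \ref{cor-formula-c} from applying: there is no $K_{3}=K_{g+1}$ to peel off, and the smallest admissible component is a $4$-cycle rather than a $(g+1)$-clique. For the upper bound I take the corner square $C=\{(1,1),(1,2),(2,1),(2,2)\}$, which induces a $C_4$ of minimum degree $2$; its outside neighborhood is $X=\{(1,3),(2,3),(3,1),(3,2)\}$, and deleting $X$ isolates $C$ while (using $m\ge 5$) leaving a remainder of minimum degree at least $2$. By Definition \ref{def3-1} this gives $c^{2}\le |X|+|C|=4+4=8$. For the lower bound, in any $2$-good-neighbor cut the relevant small piece (a component $C_r$ or a half $A_1$) has minimum degree at least $2$ and hence at least $4$ vertices by the girth argument, while $|X|\ge\kappa^{2}$; so it suffices to prove $\kappa^{2}\ge 4$, after which $c^{2}\ge\kappa^{2}+4\ge 8$ and equality follows.

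The main obstacle is precisely this inequality $\kappa^{2}\ge 4$, a vertex-isoperimetric statement asserting that deleting at most three vertices cannot disconnect the grid while preserving minimum degree $2$. I would argue it via the smallest component $C$ of such a cut: as $G[C]$ has minimum degree $2$ it contains a $4$-cycle, and all outside neighbors of $C$ lie in the cut, so $|X|\ge|\partial C|$, where $\partial C$ is the outside vertex-boundary of $C$. A short case analysis on the location of $C$ (corner, edge, or interior) then shows that $|\partial C|$ is minimized by the corner square at the value $4$. The same boundary analysis, specialized to min-degree-$0$ and min-degree-$1$ regions, recovers the bounds $\kappa^{0}\ge 2$ and $\kappa^{1}\ge 3$ used above, so a single boundary lemma settles all three cases uniformly.
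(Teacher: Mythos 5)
Your proposal is correct and follows essentially the same route as the paper: exhibit explicit corner cuts (the corner vertex, the corner edge with its three outside neighbors, and the corner $2\times 2$ square with its four outside neighbors) for the upper bounds, and combine a lower bound on the cut size with a lower bound on the smallest admissible piece for the matching lower bounds. If anything, your write-up is slightly more careful than the paper's, since you justify the cut-size claims ($\kappa^{1}\geq 3$ via the fact that every $2$-cut isolates a corner, $\kappa^{2}\geq 4$ via a boundary argument) and explicitly cover the case where the small piece is an $A_1$ rather than a whole component, points the paper's proof asserts without comment.
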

\begin{proof}
Let $H=P_n \square P_m$. If $g=0$, then it follows from Theorem \ref{th-gc-Lower} that
$c^0(H)\geq \kappa^0(H)+g+1=3.$
Let $X=\{(u_1,v_2),(u_2,v_1)\}$; see Fig. \ref{Grid}. Then $H-X$ is not connected, and there is a isolated vertex in it. So $c^0(H)\leq |X|+1=3.$

Suppose that $g=1$. Let $Y=\{(u_{n-1},v_1),(u_{n-1},v_2),(u_n,v_3)\}$; see Fig. \ref{Grid}. Then $H-Y$ is not connected, and the minimum degree of each component is at least $1$. So $c^1(H)\leq |Y|+2=5.$ For any cutset $Y\subseteq V(H)$ with $|Y|\geq 3$, if the minimum degree of each component of $H-Y$ is at least $1$, then $|Y|+|C_1|\geq 3+2=5$, where $C_1$ is the minimum component of $H-Y$. From the arbitrariness of $Y$, we have $c^1(H)\geq 5$.

Suppose that $g=2$. We choose $Z=\{(u_1,v_3),(u_2,v_3),(u_3,v_1),(u_3,v_2)\};$
see Fig. \ref{Grid}.
Then $H-Z$ is not connected, and each component has at least $4$ vertices of degree at least $2$. So $c^2(H)\leq |Z|+4=8.$ For any cutset $Z\subseteq V(H)$, if the minimum degree of each component of $H-Z$ is at least $2$, then $|Z|\geq 4$, and hence $|Z|+|C_1|\geq 4+4=8$, where $C_1$ is the minimum component of $H-Z$. From the arbitrariness of $Z$, we have $c^2(H)\geq 8$.
\end{proof}

The following corollary is immediate by Theorem \ref{th-Upper-gc-condition} and Proposition \ref{pro-6-1}.
\begin{corollary}\label{cor-6-1}
For network $P_n \square P_m(n \geq m \geq 5)$,
$$
t^{g}(P_n \square P_m)=
\begin{cases}
2 & \text { if } g=0, \\
4, & \text { if } g=1,\\
7, & \text { if } g=2.
\end{cases}
$$
\end{corollary}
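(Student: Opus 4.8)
The plan is to read this off directly from Theorem \ref{th-Upper-gc-condition} together with Proposition \ref{pro-6-1}, exactly as the prefatory sentence advertises. Theorem \ref{th-Upper-gc-condition} tells us that whenever $\kappa^g(G)$ exists and the gc number satisfies $c^g(G)\leq \lceil N/2\rceil$ (with $N$ the number of vertices), the diagnosability collapses to $t^g(G)=c^g(G)-1$. Here $G=P_n\Box P_m$ has $N=nm$ vertices, and Proposition \ref{pro-6-1} has already computed $c^g(P_n\Box P_m)=3,5,8$ for $g=0,1,2$. So the entire task reduces to checking the two hypotheses of Theorem \ref{th-Upper-gc-condition} for each of these three values of $g$, after which the claimed values $2,4,7$ are obtained by subtracting one.

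First I would confirm that $\kappa^g(P_n\Box P_m)$ exists for $g=0,1,2$. This requires no new argument: the explicit vertex sets $X$, $Y$, and $Z$ constructed inside the proof of Proposition \ref{pro-6-1} are, by that proof, $g$-good-neighbor cuts of $P_n\Box P_m$ for $g=0$, $1$, and $2$ respectively (each disconnects the grid while keeping the minimum degree of every surviving component at least $g$). The existence of such a cut is precisely the existence of $\kappa^g$, so this hypothesis is met in all three cases.

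Second I would verify the size condition $c^g(P_n\Box P_m)\leq \lceil nm/2\rceil$. Since $n\geq m\geq 5$, the grid has at least $nm\geq 25$ vertices, whence $\lceil nm/2\rceil\geq 13$. The three gc values $3$, $5$, and $8$ all lie strictly below $13$, so the inequality holds uniformly for $g=0,1,2$. With both hypotheses in place, Theorem \ref{th-Upper-gc-condition} applies and yields $t^g(P_n\Box P_m)=c^g(P_n\Box P_m)-1$, i.e. $2$, $4$, and $7$ for $g=0,1,2$, which is the assertion.

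I expect no genuine obstacle in this corollary itself: all the combinatorial difficulty—constructing the optimal cuts and proving the matching lower bounds on $c^g$—is already discharged in Proposition \ref{pro-6-1}. The only thing to be mindful of is that Theorem \ref{th-Upper-gc-condition} has two side conditions (existence of $\kappa^g$ and the $\lceil N/2\rceil$ bound), and both must be checked rather than assumed; once they are confirmed as above, the result is immediate.
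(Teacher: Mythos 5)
Your proposal is correct and is exactly the paper's argument: the paper derives the corollary as ``immediate by Theorem \ref{th-Upper-gc-condition} and Proposition \ref{pro-6-1}'', which is precisely your route. Your explicit verification of the two hypotheses (existence of $\kappa^g$ via the cuts $X$, $Y$, $Z$ built in Proposition \ref{pro-6-1}, and the bound $c^g(P_n\Box P_m)\leq 8 < 13 \leq \lceil nm/2\rceil$ since $nm\geq 25$) is a careful spelling-out of what the paper leaves implicit.
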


\subsection{2-dimensional torus networks}

A \textit{torus} is the Cartesian product of two cycles of size at least three, that is, $C_n\Box C_m$; see Fig. \ref{Torus}. Ku et al. \cite{KWH03} showed that there are $n$ edge-disjoint spanning trees in an $n$-dimensional torus.

\begin{proposition}\label{pro-6-2}
For network $C_n \square C_m \ (n \geq m \geq 8)$, we have

$(i)$
$$
c^{g}(C_n \square C_m)=
\begin{cases}
5, & \text { if } g=0 \\
8, & \text { if } g=1\\
12, & \text { if } g=2.
\end{cases}
$$

$(ii)$ $c^{3}(C_n \square C_m)\leq 4m$.
\end{proposition}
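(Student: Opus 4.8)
The plan is to prove the two claims by the two-sided method already used for the grid in Proposition \ref{pro-6-1}: explicit $g$-good-neighbor cuts for the upper bounds, and a vertex-boundary (isoperimetric) estimate for the lower bounds. Throughout write $H=C_n\square C_m$, and recall that $H$ is $4$-regular and, because $n\ge m\ge 8$, both triangle-free and $4$-connected. For an induced subgraph $C$ of $H$ let $\partial C$ denote the set of vertices outside $C$ having a neighbour in $C$, and let $\mu_g$ be the least order of an induced subgraph of $H$ of minimum degree at least $g$. A minimum-degree-$g$ graph has at least $g+1$ vertices, and for $g=2$ triangle-freeness rules out $K_3$ and forces a $4$-cycle; hence $\mu_0=1$, $\mu_1=2$, $\mu_2=4$, realised by a vertex, an edge, and a $2\times 2$ square $C_4$ respectively.

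The lower bounds follow from one observation about Definition \ref{def3-1}. If $X$ is any $g$-good-neighbor cut and $C$ is a component of $H-X$, then every vertex of $\partial C$ lies in $X$, so $|X|\ge|\partial C|$; moreover both $A_1$ and $C_r$ induce graphs of minimum degree at least $g$, so $\min\{|A_1|,|C_r|\}\ge\mu_g$. Thus, granting the isoperimetric bound
$$
|\partial C|\ge 2g+4\qquad\text{for every connected induced }C\text{ with }\delta(H[C])\ge g,\ g\in\{0,1,2\},
$$
one obtains $c^g(H)\ge(2g+4)+\mu_g$, namely $5,8,12$ for $g=0,1,2$. For the reverse inequalities I would exhibit the extremal cuts: deleting the four neighbours of a vertex isolates it, giving $c^0(H)\le 4+1=5$; deleting the six external neighbours of an edge isolates that edge, giving $c^1(H)\le 6+2=8$; deleting the eight external neighbours of a $2\times 2$ square isolates that $C_4$, giving $c^2(H)\le 8+4=12$. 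This proves $(i)$. For $g=0,1$ one has $\mu_g=g+1$, so the lower bound is already Theorem \ref{th-gc-Lower}; the case $g=2$ genuinely uses the sharper $\mu_2=4>g+1$ coming from triangle-freeness, which is why the bound of Theorem \ref{th-gc-Lower} alone, giving only $11$, does not suffice.

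For $(ii)$ a single cut suffices. With $C_n=u_1u_2\cdots u_n$ and $C_m=v_1v_2\cdots v_m$, put $C=\{(u_1,v_j),(u_2,v_j):1\le j\le m\}$, the band of two consecutive $C_m$-fibres, and $X=\{(u_3,v_j),(u_n,v_j):1\le j\le m\}$, the two fibres flanking it, so $|C|=|X|=2m$. In $H-X$ the band $C$ is a full component isomorphic to the prism $C_m\square K_2$; every vertex of $C$ has degree $3$ (the two boundary fibres have lost one neighbour each), so $\delta(H-X)\ge 3$. Since forcing minimum degree $3$ on a part of $C_m\square K_2$ propagates to all of it, $C$ admits no split into two minimum-degree-$3$ pieces, so $C$ is one of the unsplittable components $C_r$. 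Hence $c^3(H)\le|X|+|C|=4m$.

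The one substantial step is the isoperimetric inequality $|\partial C|\ge 2g+4$ used for $(i)$. For $g=0$ it is exactly the $4$-connectivity of $H$. For $g=1,2$ I would argue by the cyclic bounding box of $C$: a set wrapping a whole factor already has $|\partial C|\ge 2m\ge 16$, so one may assume $C$ projects to proper arcs of extents $p$ and $q$ in the two cycles, count the boundary vertices flanking the extreme rows and columns to get essentially $2p+2q$, and check that $\delta(H[C])\ge g$ forces $p+q\ge g+2$ (for $g=2$ both extents must be at least $2$, since a row or column of width one produces a degree-one vertex). The delicate point, and the main obstacle, is controlling the up-to-four coincidences at the corners of the box so that the count never drops below $2g+4$, with equality exactly at the edge ($g=1$) and the $2\times2$ square ($g=2$); the remaining verifications—that the deleted sets are disjoint and leave a connected complement, and the unsplittability claim in $(ii)$—are routine given $n\ge m\ge 8$.
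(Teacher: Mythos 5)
Your upper-bound cuts are exactly those in the paper's proof: the four neighbours of a vertex for $g=0$, the six external neighbours of an edge for $g=1$, the eight external neighbours of a $2\times 2$ square for $g=2$, and two fibres flanking a width-two band for $(ii)$; your extra check that the prism $C_m\square K_2$ is unsplittable is harmless but unnecessary, since for any admissible cut $X$ the quantity $\min\{|A_1|,|C_r|\}$ in Definition \ref{def3-1} never exceeds the order of a smallest component of $H-X$. The genuine gap is in the lower bounds of part $(i)$, and it is twofold. First, you explicitly do not prove your key lemma: everything in $(i)$ beyond Theorem \ref{th-gc-Lower} rests on the inequality $|\partial C|\ge 2g+4$, and you leave its proof at a sketch whose crucial step (controlling the ``corner coincidences'') you yourself call the main obstacle. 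Second, and more seriously, the lemma is false as stated: take $C=V(H)\setminus\{v\}$ for a single vertex $v$. Then $H[C]$ is connected with $\delta(H[C])=3\ge g$, yet $\partial C=\{v\}$, so $|\partial C|=1$. The same example refutes the first reduction in your sketch, namely that a set wrapping a whole factor has $|\partial C|\ge 2m$. The inequality can only hold for sets $C$ with $C\cup\partial C\neq V(H)$, i.e.\ for components of genuine disconnections, where a second component of minimum degree at least $g$ lies outside $C\cup\partial C$; any correct proof must exploit that second component, for instance by applying the bound only to a smallest component, which occupies at most half of the vertices.

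That restriction is exactly how the paper organizes its lower bounds, which is why it never needs a uniform isoperimetric statement valid for large or wrapping sets. For each $g$ it takes an arbitrary admissible cut and analyses the smallest component $C_1$, bounding the sum $|X|+|C_1|$ rather than the two terms separately. For $g=1$: if some component has exactly two vertices, they are adjacent and, by triangle-freeness, have six distinct external neighbours, all in the cut $Y$, so $|Y|+|C_1|\ge 6+2=8$; if every component has at least three vertices, then $|Y|\ge 7$ and $|C_1|\ge 3$, giving at least $10$. Note the trade-off built into this case split: a smaller boundary can only come from a smaller component, and it is the sum that matters --- this is what lets the paper sidestep the regime in which your lemma breaks down. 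For $g=2$ it uses $|Z|\ge 8$ together with $|C_1|\ge 4$ (a connected graph of minimum degree $2$ in a triangle-free host has at least four vertices). To be fair, the paper's own verification of the counts $|Y|\ge 7$ and $|Z|\ge 8$ is terse, so your proposal is not far below the paper's level of detail; but as written your route cannot be completed, because its foundational lemma needs both a restatement (restriction to components of actual cuts, or to sets of at most $nm/2$ vertices) and the boundary count that you postponed.
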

\begin{proof}
Let $G=C_n \square C_m$.
If $g=0$, then it follows from Theorem \ref{th-gc-Lower} that
$c^0(G)\geq \kappa^0(G)+g+1=5.$
Let $X=\{(u_1,v_2),(u_2,v_1),(u_1,v_m),(u_n,v_1)\}$. Then $G-X$ is not connected, and there is a isolated vertex in it. So $c^0(G)\leq |X|+1=4+1=5.$

Suppose that $g=1$. We choose an edge $(u_1,v_1)(u_1,v_2)$.  Let
$$
\begin{aligned}
Y
&=N_{G}((u_1,v_1))\cup N_{G}((u_1,v_2))-\{(u_1,v_1),(u_1,v_2)\}\\
&=\{(u_{2},v_1),(u_{2},v_2),(u_n,v_1),(u_n,v_2),(u_1,v_3),(u_1,v_n)\}.
\end{aligned}
$$
Then $G-Y$ is not connected, and the minimum degree of each component is at least $1$. So $c^1(G)\leq |Y|+2=8.$ For any cutset $Y\subseteq V(G)$, suppose that the minimum degree of each component of $G-Y$ is at least $1$.
Recall that $C_1$ is the minimum component of $G-Y$. If there is a component having two vertices, then they are adjacent and their neighbors belong to $Y$, and hence
$|Y|+|C_1|\geq 6+2=8$. If each component has at least three vertices, then their neighbors belong to $Y$, and hence
$|Y|+|C_1|\geq 7+3=10$.
From the arbitrariness of $Y$, we have $c^1(G)\geq 8$.

Suppose that $g=2$. We choose
$$
\begin{aligned}
Z
&=N_{G}((u_1,v_1))\cup N_{G}((u_1,v_2))\cup N_{G}((u_2,v_1))\cup N_{G}((u_2,v_2))-\\
&\{(u_{1},v_1),(u_{2},v_1),(u_1,v_2),(u_2,v_2)\}.
\end{aligned}
$$
Then $G-Z$ is not connected, and the minimum degree of each component is at least $2$ and each component has at least $4$ vertices. So $c^2(G)\leq |Z|+4=8+4=12.$ For any cutset $Z\subseteq V(G)$, suppose that the minimum degree of each component of $G-Z$ is at least $2$.
Then $|Z|\geq 8$, and hence $|Z|+|C_1|\geq 8+4=12$, where $C_1$ is the minimum component of $G-Z$. From the arbitrariness of $Z$, we have $c^2(G)\geq 12$.
\begin{figure}[!htbp]
\centering
\begin{minipage}{0.45\linewidth}
\vspace{3pt}
\centerline{\includegraphics[width=7.5cm]
{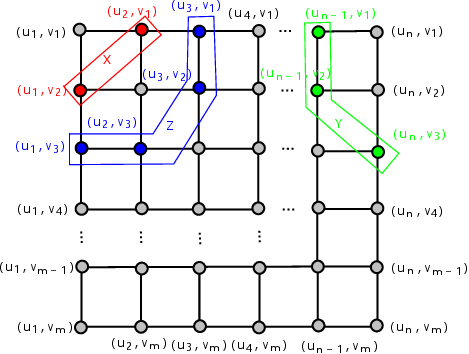}}
\caption{Grid networks.}\label{Grid}
\end{minipage}
\begin{minipage}{0.45\linewidth}
\vspace{3pt}
\centerline{\includegraphics[width=7.5cm]
{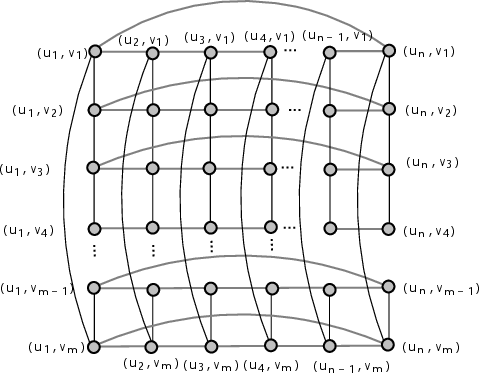}}
\caption{Torus networks.}\label{Torus}
\end{minipage}
\end{figure}

Suppose that $g=3$. We choose
$$
\begin{aligned}
W
&=\{(u_3,v_i)\,|\,1\leq i\leq m\}\cup \{(u_n,v_i)\,|\,1\leq i\leq m\}.
\end{aligned}
$$
Then $G-W$ is not connected, and the minimum degree of each component is at least $3$ and each component has at least $2m$ vertices. So $c^2(G)\leq |W|+2m=2m+2m=4m.$
\end{proof}

\begin{corollary}\label{cor-6-2}
For network $C_n \square C_m \ (n \geq m \geq 8)$, we have

$(i)$
$$
t^{g}(C_n \square C_m)=
\begin{cases}
4, & \text { if } g=0 \\
7, & \text { if } g=1\\
11, & \text { if } g=2.
\end{cases}
$$

$(ii)$ $t^{3}(C_n \square C_m)\leq 4m-1$.
\end{corollary}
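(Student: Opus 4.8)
The plan is to invoke Theorem \ref{th-Upper-gc-condition} with the gc numbers supplied by Proposition \ref{pro-6-2}, exactly as Corollary \ref{cor-6-1} was obtained from the grid case. Write $G = C_n \Box C_m$, a graph on $nm$ vertices. In each case of Proposition \ref{pro-6-2} the set removed ($X$, $Y$, $Z$, or $W$) is a genuine $g$-good-neighbor cut, so $\kappa^g(G)$ exists for every $g\in\{0,1,2,3\}$; this is precisely the standing hypothesis that Theorem \ref{th-Upper-gc-condition} requires, and it lets us pass from gc numbers to diagnosability.

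The only condition to verify before applying the theorem is the size bound $c^g(G)\leq \lceil nm/2\rceil$. Since $n\geq m\geq 8$ we have $nm\geq 8m\geq 64$, hence $\lceil nm/2\rceil \geq 32$. For $g\in\{0,1,2\}$, Proposition \ref{pro-6-2}$(i)$ gives $c^g(G)\in\{5,8,12\}$, each well below $32$, so the condition holds. For $g=3$ the bound $c^3(G)\leq 4m$ from Proposition \ref{pro-6-2}$(ii)$ satisfies $4m\leq nm/2\leq \lceil nm/2\rceil$ because $n\geq 8$; thus the condition holds here as well. Consequently Theorem \ref{th-Upper-gc-condition} applies in every case and yields $t^{g}(G)=c^{g}(G)-1$.

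For part $(i)$ I would simply substitute the exact values: $t^0(G)=5-1=4$, $t^1(G)=8-1=7$, and $t^2(G)=12-1=11$. For part $(ii)$ only the upper bound $c^3(G)\leq 4m$ is available, so the equality $t^3(G)=c^3(G)-1$ transmits the one-sided estimate to $t^3(G)\leq 4m-1$. There is essentially no obstacle once Proposition \ref{pro-6-2} and Theorem \ref{th-Upper-gc-condition} are in hand; the computation is direct. The single point worth flagging is the asymmetry between the two parts: the equality in Theorem \ref{th-Upper-gc-condition} converts exact gc values into exact diagnosabilities in part $(i)$, whereas in part $(ii)$ it can only convert a one-sided bound on $c^3(G)$ into a one-sided bound on $t^3(G)$.
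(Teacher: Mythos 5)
Your proposal is correct and is exactly the paper's (implicit) argument: just as Corollary \ref{cor-6-1} is obtained for grids, Corollary \ref{cor-6-2} follows immediately from Theorem \ref{th-Upper-gc-condition} applied to the gc values and cuts of Proposition \ref{pro-6-2}, and your verification that $\kappa^g(C_n\Box C_m)$ exists and that $c^g(C_n\Box C_m)\leq \lceil nm/2\rceil$ (using $n\geq m\geq 8$, so $4m\leq nm/2$) is precisely what makes the application legitimate. Your closing observation about the asymmetry between parts $(i)$ and $(ii)$ is also accurate: the equality $t^{g}=c^{g}-1$ turns exact gc values into exact diagnosabilities, but only transmits the one-sided bound $c^{3}\leq 4m$ as $t^{3}\leq 4m-1$.
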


\subsection{2-dimensional generalized hypercube}

Let $K_m$ be a clique of $m$ vertices, where $m \geq 2$. An \textit{$n$-dimensional generalized hypercube} \cite{DA97,FA96} is the product of $n$ cliques.

\begin{proposition}\label{pro-6-3}
For network $K_n \square K_m \ (n \geq m \geq 2)$ and $0 \leq g \leq \lfloor (m+n-4)/2\rfloor-1$, we have
$$
\begin{aligned}
n+m-1+g
\leq c^{g}(K_n \square K_m)\leq (g+1)(m-1)+n.
\end{aligned}
$$
Moreover, the bounds are sharp.
\end{proposition}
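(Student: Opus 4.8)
The plan is to establish the lower and upper bounds separately, each by exhibiting or analyzing an appropriate $g$-good-neighbor cut of $K_n \Box K_m$, and then to exhibit instances meeting each bound for sharpness. Throughout, I write $V(K_n)=\{u_1,\dots,u_n\}$ and $V(K_m)=\{v_1,\dots,v_m\}$, and I recall that $\deg_{K_n\Box K_m}((u_i,v_j))=(n-1)+(m-1)=n+m-2$, so the product is $(n+m-2)$-regular; the constraint $g\leq \lfloor(m+n-4)/2\rfloor-1$ guarantees that $\kappa^g$ exists via Proposition \ref{pro-g}.

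For the upper bound $c^g(K_n\Box K_m)\leq (g+1)(m-1)+n$, I would construct an explicit cut $X$ and a small component. The natural choice is to isolate a sub-block that is itself a $g$-good-neighbor structure: take a set of $g+1$ "columns" (copies $K_m(u_i)$) or a rectangular $(g+1)\times m$ block and cut around it. The component $C$ retained should be a grid-like piece on $g+1$ of the $K_n$-coordinates, giving $|C|=(g+1)(m-1)+\text{(something)}$ after accounting for the overlap, while $|X|$ collects the remaining boundary vertices. I would verify $\delta(G[C])\geq g$ (each retained vertex keeps at least $g$ neighbors inside $C$, which holds because within the block each vertex has $(g+1-1)+(m-1)=g+m-1\geq g$ internal neighbors) and $\delta(G-X)\geq g$ for the rest, then read off $c^g\leq |X|+|C|=(g+1)(m-1)+n$ from Definition \ref{def3-1}.

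For the lower bound $c^g(K_n\Box K_m)\geq n+m-1+g$, the cleanest route is Theorem \ref{th-gc-Lower}, which gives $c^g(G)\geq \kappa^g(G)+g+1$; it therefore suffices to show $\kappa^g(K_n\Box K_m)\geq n+m-2$. To bound $\kappa^g$ from below I would take any $g$-good-neighbor cut $X$ and a surviving vertex $w=(u_i,v_j)$ in the smallest component $C$. Since every vertex in $C$ must retain at least $g$ neighbors inside $C$ and $C$ is separated from the rest, I would count the edges leaving a minimal such component: the vertices of $C$ in a single row/column of the product have large neighborhoods in the clique directions, so cutting $C$ off forces $|X|$ to absorb at least the $n+m-2$ "clique-direction" neighbors of $w$ that lie outside $C$, adjusted by the $g$ neighbors kept inside. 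Combining $|X|\geq n+m-2$ with Theorem \ref{th-gc-Lower} yields $c^g\geq (n+m-2)+g+1=n+m-1+g$.

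The main obstacle I anticipate is the lower bound on $\kappa^g$: in a product of two cliques the neighborhoods are dense and overlap heavily, so a careful edge-isolation or boundary-counting argument is needed to show that no $g$-good-neighbor cut can be smaller than $n+m-2$, rather than merely bounding the degree of a single vertex. I would handle this by arguing that the smallest component, being $g$-good-neighbor, must contain a full "cross" (at least one shared coordinate with $g+1$ vertices in each clique direction), and then counting its outgoing edges to the complement. For sharpness I would present two tight families: the upper bound is attained when the retained component is exactly a $(g+1)\times m$ block with the boundary playing the role of $X$ (so $c^g=(g+1)(m-1)+n$), and the lower bound is attained in the small-$g$ regime where the minimal $g$-good-neighbor component is a single cross of size $n+m-1$ with $\kappa^g=n+m-2$, matching $n+m-1+g$; verifying that both constructions genuinely satisfy $\delta(G-X)\geq g$ is the routine but essential final check.
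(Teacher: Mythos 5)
Your proposal has the right skeleton (lower bound via Theorem \ref{th-gc-Lower}, upper bound via an explicit cut, sharpness via special instances), but the upper-bound construction — the heart of the proof — does not work. In $K_n\Box K_m$, a union of $g+1$ fibers $C=\{u_1,\dots,u_{g+1}\}\times V(K_m)$ (your ``$(g+1)\times m$ block'', equivalently your ``$g+1$ columns $K_m(u_i)$'') can never be a component of $G-X$ for any vertex cut $X$: every vertex $(u_i,v_j)$ with $i\geq g+2$ is adjacent to $(u_1,v_j)\in C$, so the neighborhood of $C$ is all of $V(G)-C$. The only set that separates $C$ is $X=V(G)-C$ itself, of size $(n-g-1)m$, and removing it leaves the connected graph $K_{g+1}\Box K_m$, so $X$ is not a cut at all; in particular no $g$-good-neighbor cut, and no count of the form $|X|+|C|=(g+1)(m-1)+n$, arises this way. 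The construction that does work (and is the paper's) takes the small component \emph{inside a single fiber}: $C=\{(u_i,v_1): 1\leq i\leq g+1\}\cong K_{g+1}$, and $X$ equal to the other $n-g-1$ vertices of the fiber $G(v_1)$ together with the shadow $\{(u_i,v_j): 1\leq i\leq g+1,\ 2\leq j\leq m\}$. Then $G-X$ has exactly two components, $K_{g+1}$ (minimum degree $g$) and the $(n-g-1)\times(m-1)$ block, whose minimum degree $(n-g-2)+(m-2)=n+m-g-4\geq g$ holds under the hypothesis $g\leq\lfloor(n+m-4)/2\rfloor-1$; hence $c^g(K_n\Box K_m)\leq |X|+|C|=(n-g-1)+(g+1)(m-1)+(g+1)=(g+1)(m-1)+n$.

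On the lower bound you manufactured an obstacle that is not there: $\kappa^g(K_n\Box K_m)\geq n+m-2$ needs no boundary-counting or ``cross'' argument, because every $g$-good-neighbor cut is in particular a vertex cut, so $\kappa^g\geq\kappa(K_n\Box K_m)=n+m-2$ (by \v{S}pacapan's formula $\kappa(G\Box H)=\min\{\kappa(G)|V(H)|,\kappa(H)|V(G)|,\delta(G)+\delta(H)\}$, or a direct check), and Theorem \ref{th-gc-Lower} then gives $c^g\geq (n+m-2)+g+1$ at once. Moreover your structural claim that a minimal $g$-good-neighbor component must contain a full ``cross'' is false — the component in the correct construction above is a $K_{g+1}$ lying entirely in one fiber. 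Finally, your sharpness discussion conflates exhibiting a cut (which only yields an upper bound on $c^g$) with showing the bound is attained; what is actually needed, and what the paper does, is to observe that at $g=0$ the two bounds coincide: $\kappa^0=\kappa=n+m-2$ and both bounds equal $n+m-1$, so both are simultaneously sharp.
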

\begin{proof}
From Theorem \ref{th-gc-Lower}, we have
$c^g(K_n \square K_m)\geq \kappa^g(K_n \square K_m)+g+1
\geq \kappa(K_n \square K_m)+g+1=n+m-1+g.$
Let $G=K_n$ and $H=K_m$. Choose a clique $K_{g+1}$ in $K_{n}$ with $U=\{u_1,u_2,...,u_{g+1}\}$. Let $v_1\in V(H)$ and $N_{H}(v_1)=\{v_2,v_3,...,v_{m}\}$. Let
$$
X=(G(v_1)-K_{g+1}(v_1))\cup \{(u_i,v_j)\,|\,1\leq i\leq g+1,~2\leq j\leq m\}.
$$
Then $X$ is a $g$-good neighbor cutset of $G\Box H$, and $K_{g+1}$ is a component of $G\Box H-X$, and hence $c^g(G\Box H)\leq |X|+|K_{g+1}|=(g+1)(m-1)+n$.

For $g=0$, we have $\kappa^g(K_n \square K_m)=n+m-2$, and hence the bounds are sharp.
\end{proof}

The following corollary is immediate by Theorems \ref{th-Upper-gc}, \ref{th-gc-Lower} and Proposition \ref{pro-6-3}.
\begin{corollary}\label{cor-6-3}
For network $K_n \square K_m(n \geq m \geq 4)$ and $0 \leq g \leq \lfloor n/2\rfloor-1$, we have
$$
\begin{aligned}
n+m-2+g
\leq t^{g}(K_n \square K_m)\leq (g+1)(m-1)+n-1.
\end{aligned}
$$
Moreover, the bounds are sharp.
\end{corollary}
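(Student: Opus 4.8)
The plan is to recognize the two asserted inequalities as the $c$-bounds of Proposition \ref{pro-6-3} pushed down by one through the relation between $t^{g}$ and $c^{g}$. Write $G=K_n\square K_m$, so $G$ has $nm$ vertices. Before anything else I would check the hypotheses of the quoted results on $G$: for $n\geq m\geq 2$ it is connected and non-complete with $\kappa(G)=n+m-2$, and since $m\geq 4$ gives $m+n-4\geq n$, the interval $0\leq g\leq\lfloor n/2\rfloor-1$ sits inside the interval of Proposition \ref{pro-6-3}; in particular $\kappa^{g}(G)$ exists throughout the stated range, so Theorems \ref{th-Upper-gc} and \ref{th-gc-Lower} both apply.

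For the upper bound I would simply chain Theorem \ref{th-Upper-gc} with Proposition \ref{pro-6-3}: existence of $\kappa^{g}(G)$ yields $t^{g}(G)\leq c^{g}(G)-1$, and the upper estimate $c^{g}(G)\leq (g+1)(m-1)+n$ then gives $t^{g}(G)\leq (g+1)(m-1)+n-1$.

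For the lower bound I would first obtain $c^{g}(G)\geq n+m-1+g$ from Theorem \ref{th-gc-Lower} together with $\kappa^{g}(G)\geq\kappa(G)=n+m-2$ (this is the lower estimate of Proposition \ref{pro-6-3}), and then show that every pair of distinct $g$-good-neighbor conditional faulty sets $F_1,F_2$ with $|F_1|,|F_2|\leq n+m-2+g$ is distinguishable, which forces $t^{g}(G)\geq n+m-2+g$. I would argue exactly as in the proof of Theorem \ref{th-Upper-gc-condition}: if $F_1\cap F_2=\emptyset$ then, once $\overline{F_1\cup F_2}\neq\emptyset$, connectivity produces an edge joining $F_1\triangle F_2$ to $\overline{F_1\cup F_2}$, so Theorem \ref{th-DM84} makes the pair distinguishable; and if $F_1\cap F_2\neq\emptyset$ while the pair were indistinguishable, then $F_1\cap F_2$ would be a $g$-good-neighbor cut with $F_1\triangle F_2$ a union of components of $G-(F_1\cap F_2)$, whence $|F_1|\geq c^{g}(G)\geq n+m-1+g$, contradicting $|F_1|\leq n+m-2+g$.

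The step I expect to be the real obstacle is guaranteeing $\overline{F_1\cup F_2}\neq\emptyset$, i.e.\ $|F_1|+|F_2|\leq nm-1$, because one cannot simply quote Theorem \ref{th-Upper-gc-condition}: its hypothesis $c^{g}(G)\leq\lceil nm/2\rceil$ is not visibly implied by the only available estimate $c^{g}(G)\leq(g+1)(m-1)+n$, which can exceed $\lceil nm/2\rceil$ when $g+1$ reaches $\lfloor n/2\rfloor$. What rescues the argument is the precise cutoff $g\leq\lfloor n/2\rfloor-1$ in the statement: it yields $2(n+m-2+g)\leq 3n+2m-6$, and the inequality $3n+2m-6\leq nm-1$ rearranges to $(n-2)(m-3)\geq 1$, which holds for all $n\geq m\geq 4$. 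Finally, for sharpness I would point to $g=0$: there the two estimates of Proposition \ref{pro-6-3} coincide at $n+m-1$, so $c^{0}(G)=n+m-1$ and $t^{0}(G)=n+m-2$, meeting both the lower and the upper bound of the corollary at once.
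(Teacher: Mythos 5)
Your proposal is correct, and its upper-bound half is exactly the paper's argument: chain Theorem \ref{th-Upper-gc} ($t^{g}\leq c^{g}-1$, valid since $\kappa^{g}$ exists) with the upper estimate of Proposition \ref{pro-6-3}. For the lower bound, however, you do genuinely more than the paper, and the extra work is in fact needed. The paper declares the corollary ``immediate by Theorems \ref{th-Upper-gc}, \ref{th-gc-Lower} and Proposition \ref{pro-6-3}'', but those results only bound $t^{g}$ from \emph{above} and $c^{g}$ from \emph{below}; turning $c^{g}(K_n\square K_m)\geq n+m-1+g$ into $t^{g}\geq n+m-2+g$ requires the equality direction $t^{g}\geq c^{g}-1$, i.e.\ Theorem \ref{th-Upper-gc-condition}, whose hypothesis $c^{g}(K_n\square K_m)\leq\lceil nm/2\rceil$ cannot be checked from the only available estimate $c^{g}\leq (g+1)(m-1)+n$: for even $n$ and $g+1=\lfloor n/2\rfloor$ that estimate equals $nm/2+n/2>\lceil nm/2\rceil$. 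You identify precisely this obstruction and repair it by re-running the proof of Theorem \ref{th-Upper-gc-condition} at the explicit threshold $s=n+m-2+g$ instead of invoking the theorem as a black box: distinct $g$-good-neighbor conditional faulty sets of size at most $s$ satisfy $|F_1|+|F_2|\leq 2s\leq 3n+2m-6\leq nm-1$, the last step being $(n-2)(m-3)\geq 1$, which is exactly where the hypotheses $g\leq\lfloor n/2\rfloor-1$ and $n\geq m\geq 4$ enter; with $\overline{F_1\cup F_2}\neq\emptyset$ secured, the disjoint case follows from connectivity and Theorem \ref{th-DM84}, and the intersecting case from the paper's own argument that $F_1\cap F_2$ is a $g$-good-neighbor cut forcing $|F_1|\geq c^{g}(G)\geq n+m-1+g$, a contradiction. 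Your sharpness witness at $g=0$ (both bounds collapse to $n+m-2$) coincides with the paper's. In short, your route is the paper's route made rigorous: what it buys is an actual proof of the lower bound, together with the insight that the corollary's restriction on $g$ is not incidental but is exactly what makes the distinguishability argument go through.
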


\section{Concluding Remark}

In this paper, we focus our attention on the $g$-good-neighbor conditional diagnosability of Cartesian product networks. It is interesting to study the three other main products, including lexicographic, strong and direct products.

\section{Acknowledgments}

This work was supported by National Science Foundation of China
(Nos. 12471329 and 12061059).


\end{document}